\DeclareMathOperator{\Spec}{Spec}
\DeclareMathOperator{\tridiag}{tridiag}
\newcommand{\Tridiag}[4]{\tridiag\mat{#1 & #2 \\ #3}_{#4}}
\newcommand{\cp}[2]{P^{(#1)}_{#2}} %
\newcommand{\AHRabi}[1]{H_{#1}} 
\newcommand{\tHRabi}[2]{H_{#1}^{(#2)}} 
\newcommand{\N}{\mathbb{N}} 
\newcommand{\Z}{\mathbb{Z}} 
\newcommand{\R}{\mathbb{R}} 
\newcommand{\C}{\mathbb{C}} 
\newcommand{\uH}{\mathbb{H}} 
\newcommand{\e}{\epsilon}
\newcommand{\id}{{1\!\!1}}
\theoremstyle{plain}
\newtheorem{thm}{Theorem}[section]
\newtheorem{prop}[thm]{Proposition}
\newtheorem{cor}[thm]{Corollary}
\newtheorem{conj}[thm]{Conjecture}
\theoremstyle{definition}
\newtheorem{dfn}{Definition}[section]
\theoremstyle{remark}
\newtheorem{rem}{Remark}[section]
\newcommand{\mat}[1]{\begin{bmatrix}#1\end{bmatrix}}
\title{Spacing distribution for quantum Rabi models}
\author{Daniel Braak, Linh Thi Hoai Nguyen, Cid Reyes-Bustos and Masato Wakayama}
\begin{document}

\maketitle

\begin{abstract}
  The asymmetric quantum Rabi model (AQRM) is a fundamental model in quantum optics describing the interaction of light and matter. Besides its immediate physical interest, the AQRM possesses an intriguing mathematical structure which is far from being completely understood. In this paper, we focus on the distribution of the level spacing, the difference between consecutive eigenvalues of the AQRM in the limit of high energies, i.e. large quantum numbers. In the symmetric case, that is the quantum Rabi model (QRM), the spacing distribution for each parity (given by the $\Z_2$-symmetry) is fully clarified by an asymptotic expression derived by de Monvel and Zielinski, though some questions remain for the full spectrum spacing. However, in the general AQRM case, there is no parity decomposition for the eigenvalues. In connection with numerically exact studies for the first 40,000 eigenstates we describe the spacing distribution for the AQRM which is characterized by a new type of periodicity and symmetric behavior of the distribution with respect to the bias parameter. The results reflects the hidden symmetry of the AQRM known to appear for half-integer bias.  In addition, we observe in the AQRM the excited state quantum phase transition for large values of the bias parameter, analogous to the QRM with large qubit energy, and an internal symmetry of the level spacing distribution for fixed bias. This novel symmetry is independent from the symmetry for half-integer bias and not explained by current theoretical knowledge. 

  \textbf{Keywords:} 
  asymmetric quantum Rabi models, spectral statistics, eigenvalue spacing, asymptotic distribution, spectral symmetry, spectral zeta functions, quantum integrability, numerical methods.

  \textbf{2020 Mathematics Subject Classification:} 
{\it Primary} 47B06, {\it Secondary} 81V73, 81R40.
\end{abstract}

\tableofcontents



\section{Introduction}
\label{sec:Introduction}

The quantum Rabi model (QRM), along with its generalizations, are fundamental, and at the same time, one of the simplest models for quantum interaction (see \cite{bcbs2016} and references therein). It is the fully quantized model given in \cite{JC1963} of the (semiclassical) Rabi model introduced by Isidor Issac Rabi in 1936.  The QRM describes the dipolar coupling between a two-level system and a single bosonic field mode. In this way, it supplies the basis for the theoretical description of many important structures in quantum optics and condensed matter physics, such as cavity quantum electrodynamics (CQED), quantum dots, polarons, trapped ions and solid state (circuit) QED (cQED). Partly due to its numerous applications in platforms dedicated to the implementation of quantum information devices \cite{HR2008}, it has been intensely studied both theoretically  and experimentally \cite{Le2016, Y2017, YS2018}. More recently, there has been a number of studies focused on the mathematical aspects of the QRM \cite{HH2012, Sugi2016, KRW2017, RW2019, BZ2021,RW2021, RW2023CMP}.

The subject of this paper is the eigenvalue spacing distribution of a simple but significant generalization of the QRM known as the asymmetric quantum Rabi model (AQRM), obtained by adding a static bias to the two-level system which flips the pseudo-spin spontaneously.
The Hamiltonian of the AQRM is given by
\begin{equation}
  \label{eq:AQRMHamil}
  \AHRabi{\e} := \omega a^{\dagger}a + \Delta \sigma_z + g (a + a^{\dagger}) \sigma_x + \epsilon \sigma_x,
\end{equation}
where $a^\dag$ and $a$ are the creation and annihilation operators of the bosonic mode, i.e. $[a,\,a^\dag]=1$ and
\[
\sigma_x = \begin{bmatrix}
 0 & 1  \\
 1 & 0
\end{bmatrix}, \qquad
\sigma_z= \begin{bmatrix}
 1 & 0  \\
 0 & -1
\end{bmatrix}
\]
are the Pauli matrices, $2\Delta$ is the energy difference between the two levels, $g$ denotes the coupling strength between the two-level system and the bosonic mode with frequency $\omega>0$, and the parameter \(\e \in \R\) measures the strength of the bias. The QRM, i.e. symmetric QRM, corresponds to the case $\e=0$. The term $\e \sigma_x$ in the Hamiltonian $\AHRabi{\e}$ appears naturally in implementations with flux qubits \cite{Ni2010} (see \cite{Y2017, YS2018} for a recent important experiment using this implementation to reach the deep strong coupling regime of the QRM and AQRM). The AQRM has been studied from the viewpoint of topology in \cite{LB2021b} and with respect to entanglement in \cite{SCE2022}.

This flip term given by the bias parameter $\e$ in the AQRM breaks the natural $\Z_2 (= \Z/2\Z)$  or ``parity'' symmetry which is characteristic for the QRM and renders it integrable \cite{B2011PRL}. The absence of this symmetry makes the spectrum of the AQRM in general multiplicity free. However, when the parameter $\e$ is a half-integer multiple of $\omega$ (set to 1 in the following), the spectrum shows degeneracies very similar to the QRM with $\e=0$. These degeneracies are actually related to a ``hidden'' symmetry of the AQRM which cannot be read off from the Hamiltonian directly. 
The existence of the spectral degeneration, whence the hidden symmetry, has a significant impact on the spectral structure of the system. For instance, the presence of the hidden symmetry renders the AQRM with half-integer $\e$ integrable in the quantum sense like the QRM. Indeed, the notion of quantum integrability proposed in \cite{B2011PRL} can be connected to the well-known concept of Yang-Baxter integrability \cite{E2019} by using the concept of ``analytic commutant'' introduced in \cite{RBBW2021}. In Section \ref{sec:preliminaries} we give an overview of the hidden symmetry and degeneracy of the AQRM.

To analyze qualitative aspects of the spectrum and its asymptotic structure, important information may be obtained from the distribution of the difference between adjacent eigenvalues (or level spacing distribution) and other statistical quantities. For instance, the distribution of eigenvalues of symmetric or Hermitian random matrices (GOE and GUE ensembles, respectively \cite{M2004}) resembles the spacing distribution of the zeros of the Riemann zeta function (see \cite{KS1999} and the references therein). 
  Actually, from  the mathematical viewpoint, some of the major motivations for research on the QRM are group theoretical symmetries and arithmetic properties of its Hamilton operator. This is natural since the QRM and its relatives feature both noncommutativity and discreteness, making them a reasonable research setting for non-abelian groups (and Lie algebras) and number theory.

  When considering such a Hamilton operator, the eigenvalue problem is generally quite complicated, hence finding the eigenvalues explicitly is considerably difficult. Moreover, even if the exact spectrum in terms of the spectral determinant can be found, only the simplest cases, such as the harmonic oscillator, allow for expressions in closed form. In all other cases, the study of the statistical distribution of the eigenvalues may yield valuable insight into qualitative aspects of the spectrum. In this regard, the spectral zeta function and the spacing distribution of consecutive eigenvalues are particularly useful. (Let us note that the spacing distribution of consecutive zeros of several  zeta functions are very important subjects in number theory \cite{KS1999}.) Concerning the former, of particular importance are the study of the analytic properties (see, e.g. \cite{IW2005a, Sugi2016, RW2021}), the derivation of the Weyl law and the special values (see \cite{KW2023} and the reference therein, and e.g. \cite{KZ2001} for further mathematical implications). For the latter, there are many studies within random matrix theory which have uncovered a relation of the level distribution to the integrability of the system.

  For instance, there is a well-known phenomenological observation (see, e.g. \cite{S2003}) about the spectrum of the hyperbolic Laplacian $\Delta_{\uH}$ (commuting with the action of $SL_2(\R)$) on the hyperbolic space $SL_2(\Z)\backslash \uH$, $\uH$ being the upper half-plane $SL_2(\R)/SO(2)$ equipped with the Poincar\'e metric (as a Riemann symmetric space).
  We denote by $\lambda_j$ the (discrete) eigenvalues of $\Delta_{\uH}$ given in increasing order. Since the volume (area) of the space $SL_2(\Z)\backslash \uH$ is $\frac1{12}$ in the standard normalization, by setting $\tilde{\lambda}_j=\frac1{12}\lambda_j$ the Weyl law for the spectrum shows that the mean spacing between the numbers $\tilde{\lambda}_j$ is $1$. Then, numerical experiments suggest that the consecutive spacing (or the local spacing statistics) follows a Poisson distribution, which was initially unexpected. More precisely \cite{St1992}, for $\alpha \geq 0$ the consecutive spacing apparently satisfies the law
  \begin{equation}
    \label{eq:Laplace}
    \lim_{N\to\infty} \frac{\#\{j\leq N\, |\, 0\leq \tilde{\lambda}_{j+1} -\tilde{\lambda}_j \leq \alpha \}}N = \int_0^\alpha e^{-t}dt.
  \end{equation}
It has been verified numerically that the consecutive spacing distribution of the eigenvalues
  of  $\Delta_{\uH}$ for arithmetic subgroups $\Gamma$ of $SL_2(\R)$ as the case of $SL_2(\Z)$ appear to follow the Poisson distribution while that of non-arithmetic groups
  follow the GOE distribution \cite{S1991}. Note that the study is directly related to the consecutive zeros of the Selberg zeta
  function (see, e.g. \cite{KS1999}) which counts the number of prime geodesics in $\Gamma\backslash \uH$ via the trace formula for an arithmetic subgroup $\Gamma(\subset SL_2(\R))$ or its determinant expression (see  \cite{S1987} and also \cite{KuroW2004}). 
  
We note that in the previous examples the hyperbolic Laplacian $\Delta_{\uH}$ is essentially given by a Gaussian hypergeometric differential operator \cite{Sugiura1975, L1985} (given by the Casimir element, i.e., the 2nd order central element of the universal enveloping algebra of the Lie algebra $\mathfrak{sl}_2(\R)$), in other words, the geometry is described by a Gaussian hypergeometric ODE with three regular singularities. However, the non-commutative harmonic oscillator (NCHO \cite{PW2001}) has a Heun ODE \cite{SL2000} picture in the sense that the eigenvalue problem is equivalent to the existence of a holomorphic solution in some domain (depending on the coupling parameters in the Hamiltonian) for a certain Heun ODE with four regular singular points \cite{O2001, W2015IMRN}. Also, it is well-known that the QRM has a confluent Heun ODE picture with two regular singularities and an irregular one of degree two. Based on the corresponding ODE pictures, the NCHO may be regarded as a covering model of the QRM in the sense of \cite{RW2023CMP}. In this study, using the result of \cite{BZ2021} on the oscillatory behavior of the large eigenvalues of the QRM for each parity (determined by the $\Z_2$-symmetry of the Hamiltonian) we observe that the spacing
distribution of consecutive eigenvalues for the QRM is neither a Poisson distribution nor does it follow the GOE or GUE distributions which would have been an indicator for the chaotic (non-integrable) nature of the system. Indeed, it can be argued that the QRM is quantum integrable solely due to its  $\Z_2$-symmetry \cite{B2011PRL}.  

An early argument \cite{kus1985} invoked against the integrability of these models stems from the statistical analysis of the level spacing distribution, because it is not of Poissonian type as required by the Berry-Tabor criterion \cite{BT1977}. However, as this criterion does only apply to systems with \textit{at least two} continuous degrees of freedom, it says nothing about the QRM and AQRM which possess only one of these (the bosonic mode). Thus, the common tool to discern between integrable and chaotic quantum dynamics is not applicable in the present circumstances. A detailed study of the level spacing distribution in the (A)QRM to see the impact of the symmetry is the subject of the present paper.

The paper is structured as follows. In Section \ref{sec:preliminaries} we present the necessary background on the AQRM, our numerical method is discussed in Section \ref{sec2:EigenvalueComputation} and Section \ref{sec:spacing} contains the main results and findings. Namely,  based on numerical experiments we give the expression of the density of the energy spacing for the QRM and prove\footnote{In the first draft of this paper (arXiv:2310.09811 [math-ph] 26 October 2023) this was a conjecture. The authors came across the paper \cite{cfz2023} after uploading the manuscript to the arXiv.}  the distribution for the AQRM using recent results \cite{cfz2023} on the asymptotic distribution for the AQRM. From this, we see that the spectrum itself and the spacing distribution have a symmetric structure and a peculiar periodicity with respect to the parameter $\e$, indicating the presence of the hidden symmetry for half-integer values. Furthermore, the level spacing distribution for fixed $\e$ shows a kind of reflection symmetry of yet unknown origin. Section \ref{sec:furtherobservations} discusses the case of large $\e$ and the signature of an excited state quantum phase transition in the AQRM, along with a numerical study of the order of the error term of the approximation in the asymptotic estimates for the high-energy eigenvalues in the QRM given in \cite{BZ2021}.


\section{Asymmetric quantum Rabi model}
\label{sec:preliminaries}

In this section, we  recall some basic facts on the AQRM and
its spectrum.
As mentioned in the introduction, the Hamiltonian of the AQRM defined in \eqref{eq:AQRMHamil} acts on an infinite-dimensional  Hilbert
space $\mathcal{H}$ for parameters $\omega, g,\Delta>0$ and $\e \in \R$. Throughout the paper, we set $\omega=1$ without loss of generality. 
Note that
\[
  \Spec(\AHRabi{\e}) = \Spec(\AHRabi{-\e}),
\]
and we assume $\e \geq 0$ with no loss of generality. On occasion, we assume the
realization $\mathcal{H} = L^2(\R)\otimes \C^2$ to simplify the arguments.
We denote the spectrum of the AQRM by $\lambda_n = \lambda_n(g,\Delta,\e) \in \R, \, n=1,2,3,\ldots$ with
\[
  \lambda_1 < \lambda_2\leq \ldots \leq \lambda_n \leq \ldots \nearrow +\infty.
\]
In general, we omit the dependence on parameters if it is clear from the context.
Note that the ground state corresponding to the first eigenvalue $\lambda_1$ is simple as for the QRM \cite{HH2012}.

In Figure \ref{fig:Eigencurves1}, we illustrate the spectral curves for the AQRM for different values of $\epsilon \geq 0$.
For better visibility, we show the spectral curves (or level lines) of the renormalized Hamiltonian $\AHRabi{\e}+g^2$, where the asymptotic spectral curves are horizontal. It is immediately clear from the figures that for certain values of $\e$ there are crossings in the spectral curves. We discuss the details of this spectral degeneracy (energy level crossing) in Section \ref{sec:degeneracysymmetry} below by summarizing the results obtained in \cite{KRW2017, RBBW2021, RW2022}.

\begin{figure}[h!]
  \begin{center}
    \begin{subfigure}[b]{0.32\textwidth}
    \centering
    \includegraphics[height=3.5cm]{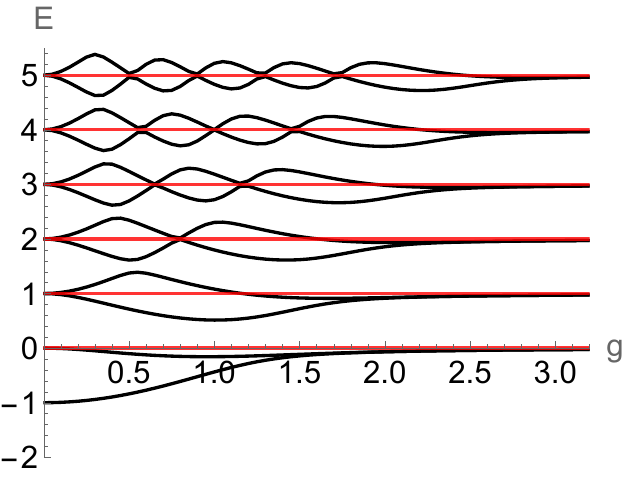}
    \caption{\(\e = 0\)}
  \end{subfigure}
  ~
    \begin{subfigure}[b]{0.32\textwidth}
    \centering
    \includegraphics[height=3.5cm]{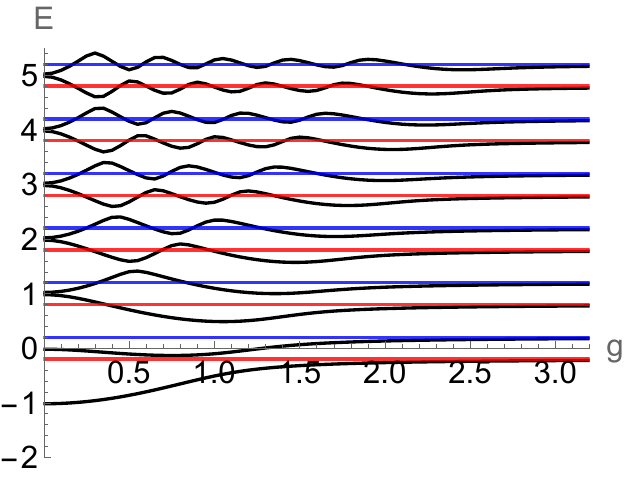}
    \caption{\(\e = 0.2\)}
  \end{subfigure}
    ~
    \begin{subfigure}[b]{0.32\textwidth}
    \centering
    \includegraphics[height=3.5cm]{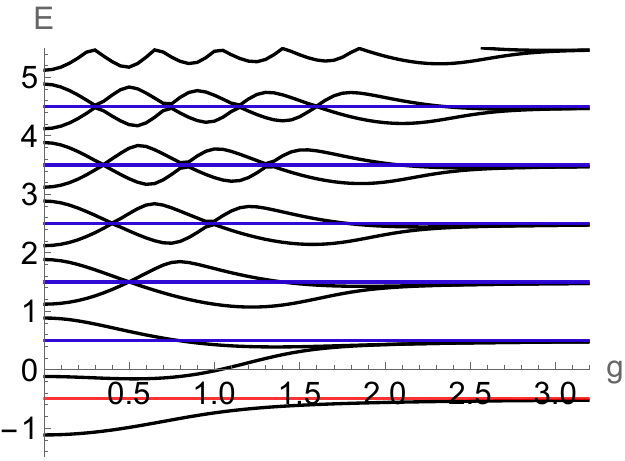}
    \caption{\(\e = 0.5\)}
  \end{subfigure}
 \end{center}
  \caption{Renormalized spectral curves of the Hamiltonian $\AHRabi{\e}+g^2$ for $\e = 0,0.2,0.5$ with
  exceptional lines (baselines) $y=n - \e$ (red) and $y=n + \e$ (blue) for $n=0,1,\cdots,5$.}
  \label{fig:Eigencurves1}
\end{figure}

Due to the difficulty to obtain rigorous results on the energy spacing for the AQRM with finite $g$, it is convenient to recall 
the weak limit $g\to \infty$ where the behavior of the spectrum is known explicitly. 
In \cite{RW2023} it was shown using spectral zeta function methods that the spectrum of the AQRM in the limit
$g\to \infty$ is given by
\begin{equation}
    \label{eq:limitspec}
    n \pm \e,
\end{equation}
for $n\geq 0$. As shown in Figure \ref{fig:Eigencurves1}, the spectral curves resemble the behavior of the limit $g \to\infty$ even for relatively small $g$ (small from the numerical viewpoint; any physical realization of these coupling values corresponds to the ``deep strong" coupling regime of the AQRM \cite{Y2017, YS2018}). 
This result was also recently proved by Fumio Hiroshima using resolvent methods \cite{H2023}, and has been considered in the physics literature as the ``natural" behavior for large coupling, albeit a formal proof has never been explicitly given (see e.g. \cite{S1985}). 

\subsection{Spectral degeneracy and symmetry}
\label{sec:degeneracysymmetry}

  A well-known property of the spectrum of the QRM is that the level lines in the spectral graph exhibit crossings between eigenvalues belonging to different parities \cite{K1985JMP} (illustrated in Figure \ref{fig:Eigencurves1}(a)).
  In fact,  in the symmetric case $\e=0$ there is a $\Z_2$-symmetry (parity) given by the involutive operator $J_0= \mathcal{P} \sigma_z$, with $\mathcal{P}= \exp(2 \pi i a^\dag a)$, such that
\[
  [\AHRabi{0} ,J_0]= 0.
\]
This leads to a decomposition of the Hilbert space $\mathcal{H}$ 
\[
  \mathcal{H} \simeq \mathcal{H}_+ \oplus \mathcal{H}_-
\]
into invariant subspaces $\mathcal{H}_{\pm} \simeq L^2(\R)$. The absence of level repulsion is obviously due to the invariance of $\mathcal{H}_\pm$ with respect to $\AHRabi{0}$.

The Hamiltonian $H_{\pm}$ is the projection of $\AHRabi{0}$ onto the subspace $\mathcal{H}_\pm$ and given by
\[
  H_{\pm} =  a^{\dag} a + g ( a + a^{\dag}) \pm \Delta \hat{T},
\]
where $\hat{T}$ defined by $(\hat{T}\psi)(x):= \psi(-x)$  for $\psi(x) \in L^2(\R)$ is the reflection operator acting on \( L^2(\R)\).
According to this decomposition, we call the eigenstates of $H_+$ (resp. $H_-$) to have positive (resp. negative)
parity.

Note that in Figure \ref{fig:Eigencurves1}(b) with  $\e = 0.2$ there are no spectral crossings and in Figure \ref{fig:Eigencurves1}(c) the crossings appear again for $\e = 0.5$. In general, it is known that when $\e \ne 0$ there are no longer crossings in the spectral curves except for the case $\e \in \frac12 \Z$. In the half-integer case $\e = \frac{\ell}2$ ($\ell \in \mathbb Z$), the symmetry operator was finally found \cite{MBB2020} for small epsilon and investigated from a mathematical viewpoint in \cite{RBBW2021, RW2022}. Concretely, it was shown in \cite{RBBW2021} that there is a unique (up to a scalar multiple) self-adjoint operator $J_{\ell}$ such that
\[
  [\AHRabi{\frac{\ell}2},J_{\ell}]=0
\]
and $J_\ell$ generates together with $\AHRabi{\frac{\ell}2}$  the analytic commutant of $\AHRabi{\frac{\ell}2}$. The question arises whether the presence of this operator which is ``analytically'' independent from $\AHRabi{\frac{\ell}2}$ \cite{RBBW2021}, indicates a symmetry of $\AHRabi{\frac{\ell}2}$, similar to $\AHRabi{0}$.

A key property of the operator $J_\ell$ is that it satisfies
\begin{equation} \label{eq:Jquad2}
  J_{\ell}^2 = p_{\ell}(\AHRabi{\frac{\ell}2};g,\Delta)
\end{equation}
for a polynomial $p_{\ell} \in \R[x,g,\Delta]$ of degree $\ell$. However, the operator $J_{\ell}$ itself is clearly not a polynomial function of $\AHRabi{\frac{\ell}2}$ over $\C$. Thus,  the ``analytic'' commutant  algebra of  $\AHRabi{\frac{\ell}2}$ is given by the polynomial ring  $\C[\AHRabi{\frac{\ell}2},J_\ell]$.  Note that $p_{\ell}(x;g,\Delta)\geq0$ for any $x\in \R$ and the equation $y^2=p_{\ell}(x;g,\Delta)$ defines in general a hyperelliptic curve on the $(x,y)$-plane \cite{RW2022}.

But $J_\ell$ is not involutive, so it does not fulfill the $\Z_2$ group relation $J_\ell^2=\id$. On the other hand, if an eigenvector $\psi_\lambda$ of $\AHRabi{\frac{\ell}2}$ with eigenvalue $\lambda$ is not an eigenvector of $J_\ell$, the state $J_\ell\psi_\lambda=\psi'_\lambda$ is an eigenvector of $\AHRabi{\frac{\ell}2}$ with the same eigenvalue $\lambda$. But $J_\ell\psi'_\lambda =\alpha\psi_\lambda$ with $\alpha>0$ due to relation \eqref{eq:Jquad2}, so the degenerate subspace $\mathcal{H}_\lambda$ is just two-dimensional. After diagonalization, $J_\ell$ acts in $\mathcal{H}_\lambda$ as a multiple of $\sigma_z$ which satisfies $\sigma_z^2=\id$. It follows that a suitably normalized $J_\ell$ provides a representation of $\Z_2$ in $\mathcal{H}_\lambda$.
For a non-degenerate eigenvalue $\lambda \in \Spec(\AHRabi{\frac{\ell}2})$ we may define the parity of $\psi_\lambda$ as the sign of the associated eigenvalue $\mu(\lambda)$ of $J_\ell$. 

A problem arises if for some non-degenerate eigenvalue $\lambda_0$ the corresponding eigenvalue of $J_\ell$ vanishes, $\mu(\lambda_0)=0$. At present, it is unknown whether this case may occur or not. If so, the parity of a  spectral line could change discontinuously if a model parameter is varied and would be undefined for $\psi_{\lambda_0}$.
However, it is not difficult to show that for large enough $\lambda \in \Spec(\AHRabi{\frac{\ell}2})$, the parity is well-defined. Actually we have the following. 

\begin{prop}
  \label{prop:parity}
  For $\e = \frac{\ell}2$ with integer $\ell$, except for at most a finite number of eigenvalues we can assign parities to the eigenvalues such that spectral crossings occur only between eigenvalues of different parities.
\end{prop}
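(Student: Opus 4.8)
The plan is to read off the parity directly from the quadratic relation \eqref{eq:Jquad2}. Since $\AHRabi{\frac{\ell}2}$ and $J_\ell$ are self-adjoint and commute, $J_\ell$ preserves each eigenspace $\mathcal{H}_\lambda$ of $\AHRabi{\frac{\ell}2}$, and on $\mathcal{H}_\lambda$ relation \eqref{eq:Jquad2} reads $\bigl(J_\ell|_{\mathcal{H}_\lambda}\bigr)^2 = p_\ell(\lambda)\,\id$; hence the only eigenvalues $J_\ell$ can take on $\mathcal{H}_\lambda$ are $\pm\sqrt{p_\ell(\lambda)}$. Because $p_\ell \ge 0$ on $\R$ with $\deg p_\ell = \ell$, its zero set $Z=\{x\in\R: p_\ell(x)=0\}$ has at most $\ell$ points, so only finitely many eigenvalues $\lambda_n$ can lie in $Z$ — and these (each of multiplicity at most two, by the discussion preceding the statement) are the only ones for which $\mu(\lambda_n)$ may vanish. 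This already singles out the finite exceptional set.

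For every eigenvalue $\lambda\notin Z$ one has $p_\ell(\lambda)>0$, so both admissible values $\pm\sqrt{p_\ell(\lambda)}$ are nonzero. If $\lambda$ is simple, $\mathcal{H}_\lambda$ is one-dimensional and $J_\ell$ acts on it by the nonzero scalar $\mu(\lambda)$, so the parity $\sgn\mu(\lambda)$ is well-defined as in the statement. If $\lambda$ is a crossing, then $\dim\mathcal{H}_\lambda=2$ and I would show that $J_\ell|_{\mathcal{H}_\lambda}$ is traceless: by the mechanism recalled before the statement a crossing is created by an eigenvector $\psi_\lambda$ that is not itself a $J_\ell$-eigenvector, so in the basis $\{\psi_\lambda, J_\ell\psi_\lambda\}$ the operator $J_\ell$ has the off-diagonal form $\left(\begin{smallmatrix}0 & p_\ell(\lambda)\\ 1 & 0\end{smallmatrix}\right)$, with eigenvalues $+\sqrt{p_\ell(\lambda)}$ and $-\sqrt{p_\ell(\lambda)}$. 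Thus the two states meeting at such a crossing carry opposite parities.

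The main obstacle is precisely to exclude the complementary possibility: a two-fold degeneracy $\lambda\notin Z$ on which $J_\ell$ acts as a scalar $\pm\sqrt{p_\ell(\lambda)}\,\id$, i.e.\ a crossing lying entirely inside one parity sector, which the off-diagonal argument does not reach. To handle this I would introduce the spectral projections $P_\pm$ of $J_\ell$ onto $(0,\infty)$ and $(-\infty,0)$; these commute with $\AHRabi{\frac{\ell}2}$ and give $\mathcal{H}=\mathcal{H}_+\oplus\mathcal{H}_-\oplus\ker J_\ell$, where $\ker J_\ell=\ker p_\ell(\AHRabi{\frac{\ell}2})$ is finite-dimensional and supported on $Z$. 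The assertion ``crossings join different parities'' is then equivalent to the simplicity of the spectrum of $\AHRabi{\frac{\ell}2}$ restricted to each sector $\mathcal{H}_\pm$, since a same-parity crossing would be a multiplicity-two eigenvalue of $\AHRabi{\frac{\ell}2}|_{\mathcal{H}_+}$ or $\AHRabi{\frac{\ell}2}|_{\mathcal{H}_-}$. I expect this sector-simplicity to be the delicate step; I would establish it either by reducing $\AHRabi{\frac{\ell}2}|_{\mathcal{H}_\pm}$ to a single-mode (confluent-Heun type) problem whose eigenvalues are the simple zeros of a transcendental $G$-function, as for the QRM, or by invoking the explicit description of the degeneracy locus in \cite{KRW2017, RBBW2021, RW2022}, where each crossing is shown to connect the two distinct joint $(\AHRabi{\frac{\ell}2}, J_\ell)$-eigenspaces. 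Collecting the at most $\ell$ eigenvalues in $Z$ as the finite exceptional set then yields the claim.
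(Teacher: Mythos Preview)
Your core argument coincides with the paper's: parity is undefined only where $\mu(\lambda)=0$, i.e.\ where $p_\ell(\lambda;g,\Delta)=0$, and since $p_\ell(\,\cdot\,;g,\Delta)$ is a polynomial of degree $\ell$ in $x$ this can happen for at most finitely many eigenvalues. That is literally all the paper's proof does --- it is three sentences, reducing the question to the boundedness of the real zero set of $p_\ell$ and concluding.

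Where you diverge is in scope. You go on to analyze the degenerate eigenspaces and flag the possibility that $J_\ell$ might act as a nonzero scalar on some two-dimensional $\mathcal{H}_\lambda$ (a ``same-parity crossing''), proposing to exclude it via sector-simplicity of $\AHRabi{\frac{\ell}2}|_{\mathcal{H}_\pm}$ or the degeneracy classification in \cite{KRW2017,RBBW2021,RW2022}. The paper does not address this in the proof at all: the crossing structure (that on $\mathcal{H}_\lambda$ the operator $J_\ell$ diagonalizes to a multiple of $\sigma_z$) is taken as already established by the paragraph immediately preceding the proposition together with those same references, and the proposition's proof only supplies the finiteness of the exceptional set. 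So your ``main obstacle'' is a genuine mathematical point, but it lies outside what the paper undertakes here; your proposed remedies are precisely the external input the paper is implicitly invoking rather than reproving.
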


\begin{proof}
  
  In general, this problem is equivalent to showing that the kernel of the operator $J_{\ell}$ is trivial, which is
  equivalent to showing that all the eigenvalues $\lambda \in \Spec(\AHRabi{\frac{\ell}2})$ satisfy
  \[
    p_\ell(\lambda;g,\Delta) \neq 0.
  \]
  
  Since $p_\ell(x;g,\Delta)$ is a polynomial of degree $\ell$ with respect to $x$, the set of roots is bounded. Therefore, there exists a constant $c>0$ such that \(p_\ell(\lambda;g,\Delta) \neq 0\) for all $\lambda \in \Spec(\AHRabi{\e})$ satisfying $|\lambda| > c$.
\end{proof}

In practice, determining the parity of an eigenvalue is non-trivial since it depends on explicit information on the polynomial
$p_\ell(\lambda;g,\Delta)$ (see the discussion in \cite{RW2022}). However, under certain assumptions, it is possible to assign the parity at $g=0$ and extend it along the spectral lines (see \cite{RW2022}). We may then verify that this method coincides with the (parity invariant) subspace decomposition for the case $\e=0$. 

\subsubsection{Constraint polynomials}
\label{sec:constraint}

As mentioned in the introduction, numerical investigations suggested that for half-integer bias, degeneracies appear in the quasi-exact (Juddian) part of the spectrum of the AQRM \cite{LB2015JPA, BLZ2015} and a rigorous proof was given in \cite{KRW2017} using 
constraint polynomials. In this section, we recall some of the properties of constraint polynomials for the AQRM and we refer to the aforementioned papers for the detailed discussion. 

The constraint polynomials $\cp{N,\frac{\ell}{2}}{N}(u,v)$  are such that a zero, with respect to the variables $u=(2g)^2$ and $v=\Delta^2$, of the equation
\[
\cp{N,\frac{\ell}{2}}{N}(u,v) = 0
\]
gives a Juddian (quasi-exact) solution of the AQRM with the energy (eigenvalue) $x=N-g^2+\frac{\ell}2$. 
In \cite{KRW2017} it was shown that the constraint polynomials satisfy the equation
\begin{align}
  \label{eq:div}
  \cp{N+\ell,-\ell/2}{N+\ell}(u,v) =  A^\ell_N(u,v) \cp{N,\ell/2}{N}(u,v),
\end{align}
where \( A^\ell_N(u,v)\) is a polynomial in $(u,v)$ satisfying \( A^\ell_N(u,v) >0\) for $u, v > 0$. Then, by the elementary relation 
\[
    (N+\ell)-g^2-\frac{\ell}2=N-g^2+\frac{\ell}2,
\]
we find that the degeneracy occurs if and only if $u=(2g)^2$ and $v=\Delta^2$ satisfy $\cp{N,\frac{\ell}{2}}{N}(u,v)= 0$.  

In addition, it is known \cite{KRW2017} that the polynomial $A^\ell_N(u,v)$ has the determinant expression
\begin{equation}
    \label{eq:Adetexp}
    A^\ell_N(u,v) = \frac{(N+\ell)!}{N!}\det\Tridiag{u+\frac{v}{N+i}-\ell+2i-1}{1}{-i(\ell-i)}{1\le i\le \ell},
\end{equation}
where we used the notation
\begin{equation*}
  \Tridiag{a_i}{b_i}{c_i}{1\le i\le n}
  :=\begin{bmatrix}
    a_1 & b_1 & 0 &  \cdots & 0    \\
    c_1 & a_2 & b_2 &  \cdots  & 0\\
    \vdots & \ddots   & \ddots &  \ddots & \vdots   \\
    0 &  \cdots &  0  & a_{n-1} & b_{n-1} \\
    0 & \cdots  & 0  & c_{n-1} & a_n
  \end{bmatrix}
\end{equation*}
for a tri-diagonal matrix.  We note here that the constraint polynomials and $\cp{N,\ell}{N}(u,v)$ appear to be closely related \cite{RW2022} to, and in fact have been used to approximate, the full spectrum of the AQRM as shown in the generalized adiabatic approximation \cite{LB2021b}.

In the next section, we provide a brief overview of the truncated Hamiltonian method, the standard method to compute the spectrum of the AQRM and other models in quantum optics. 


\section{Truncated Hamiltonian method for the AQRM}
\label{sec2:EigenvalueComputation}

In general, it is difficult to determine the energy levels by solving the Schr\"odinger equation 
\[
    H \phi_n(x)=E_n \phi_n(x),
\]
where $H$ is the Hamiltonian operator of the system, $\phi_n(x)$ are wave functions that represent the
state of the system in the position basis, and $E_n$ are the corresponding energy levels (eigenvalues).

The truncation of the Hamiltonian is a numerical method to study eigenvalue problems in infinite-dimensional vector spaces, known as the Rayleigh-Ritz method in quantum mechanics, or simply as ``exact diagonalization'' \cite{WH2008}. 
 
Given a self-adjoint operator $H$ in $\mathcal{H}$, the method is based on finding finite-rank Hamiltonians $H^{(N)}$ by projecting (``truncating'') $H$ onto a finite-dimensional subspace of $\mathcal{H}$. The  $H^{(N)}$ should ``approximate'' $H$ so that the spectrum of $H^{(N)}$, computed using standard numerical tools, approximates the spectrum of $H$. In general, an operator $H$  cannot be approximated by finite-rank operators unless it is a compact operator \cite{RS1981}. The QRM and the AQRM are not compact operators but it has been shown that the truncation method is appropriate to compute their eigenvalues \cite{Braak2013}.

Before considering the general case, let us first treat the computation of the eigenvalues for the QRM, that
is the Hamiltonian $\AHRabi{0}$. In this case, since we have the invariant decomposition of $\mathcal{H}$, we can consider
truncated state spaces for each of the parities.

The projection operators $\hat P^{(N)}_{\pm} : \mathcal{H}_{\pm} \to \mathcal H^{(N)}_{\pm} \simeq \C^{N+1}$ are used to project
$\AHRabi{0}$ onto $\mathcal{H}_{\pm}$. The matrix expansion with respect to the eigenbasis of $\mathcal{H}_\pm$ given by 
eigenfunctions of $a^\dag a$ is a tridiagonal matrix
\begin{equation}\label{QRM_TruncatedHamiltonians}
  H^{(N)}_{\pm} = 
  \begin{pmatrix}
    \pm \Delta & g & 0 & \ldots & 0  & 0\\
    g & 1 \pm (-\Delta) &  \sqrt{2}g  & \ldots &  0 & 0 \\
    0&  \sqrt{2}g  &2 \pm\Delta&\ldots & 0  & 0\\
    \vdots &\vdots &\vdots &\ddots &\vdots  & \vdots \\
    0 &0  &0 & 0 &  (N-1)  \pm (-1)^{N-1} \Delta&  \sqrt{N} g\\
    0& 0 &0 & 0 &  \sqrt{N} g & N  \pm (-1)^{N}\Delta
  \end{pmatrix}.
\end{equation}

The spectrum of the QRM can be approximated from the eigenvalues of this matrix provided that the Hamiltonian is truncated at a sufficiently large dimension.

For the general case $\e \neq 0$, since there is no known invariant decomposition, we consider projections of $\mathcal{H}$ into
subspaces $\mathcal{H}^{(N)} \simeq \C^{N+1} \otimes \C^2$. We can rewrite the Hamiltonian of the AQRM in \eqref{eq:AQRMHamil} as
\begin{align*}
    \AHRabi{\e}
    &=\begin{pmatrix}
    a^\dagger a+\Delta &g(a^\dagger+a)+\epsilon \\
    g(a^\dagger+a)+\epsilon & a^\dagger a-\Delta 
\end{pmatrix}.
\end{align*}

The Hamiltonian of the AQRM can be approximated by the matrix of size $2(N+1)$, given in block form as
\[
  \tHRabi{\e}{N} =\left[
    \begin{array}{c|c}
      M_1 & M_2 \\ \hline
      M_2 & M_3
    \end{array}\right],
\]
where $M_1 = D_{N+1} + \Delta I_{N+1}$, $M_3 = D_{N+1} - \Delta I_{N+1}$ with $D_N = {\rm diag}(0,1,\cdots,N-1)$, $I_N$ is the
identity matrix, and
\[
  M_2 =
  \begin{pmatrix}
    \epsilon&g\sqrt 1&0&\ldots&0&0\\
    g\sqrt 1&\epsilon&g\sqrt 2&\ldots&0&0\\
    0&g\sqrt 2&\epsilon&\ldots&0&0\\
    \vdots&\vdots&\vdots&\ddots&\vdots&\vdots\\
    0 & 0 & 0 & \ldots &\epsilon& g\sqrt{N}\\
    0&0&0&\ldots&g\sqrt N&\epsilon\\
  \end{pmatrix}.
\]

The accuracy of the eigenvalues for a given level $N$ improves as we increase the dimension of the matrix. For the computations in this paper, we are able 
to approximate the  first $40,000$ eigenvalues in the spectrum of the AQRM, the number being limited by the memory of
the computer (2.5 GHz Dual-Core Intel Core i7 processor, 16 GB 2133 MHz LPDDR3 memory). 


\section{Energy spacing distribution}
\label{sec:spacing}

In this section, we describe the distribution of the energy spacing of the full spectrum of the (A)QRM using
known asymptotic estimates for the QRM as a basis.

First, for system parameters $(g,\Delta,\e)$, let us define the set of energy spacing $S(g,\Delta,\epsilon)$ as 
\[
  S(g,\Delta,\e) := \{ \lambda_{n+1} - \lambda_n \, | \, \lambda_n,\lambda_{n+1} \in \Spec(\AHRabi{\e}) \}_{n\geq1}.
\]
Clearly, $S(g,\Delta,\e) \subset [0,\infty)$ and $0 \in S(g,\Delta,\epsilon)$ if and only if the spectrum is degenerate. 
In particular, note that if $\e \notin \frac12\Z$, we have $S(g,\Delta,\e) \subset (0,\infty)$. 

To study the distribution of the set $S(g,\Delta,\e)$ we first consider the symmetric case ($\e=0$),
that is, the QRM. In this case,  the spacing behavior for a single parity follows from the asymptotics of the eigenvalues,  obtained by de Monvel and Zielinski in \cite{BZ2021}.

Let us denote by $\lambda_n(H_\pm)$ the $n$-th eigenvalue of the Hamiltonian $H_\pm$ 
\[
  \lambda_1(H_\pm)< \lambda_2(H_\pm)< \cdots.
\]
Note that degeneration occurs only between the spectrum of $H_+$ and $H_-$. We reproduce here the
main result of \cite{BZ2021} in a slightly different formulation (see also \cite{BZ2017}).
In Section \ref{sec:curvefitting}, using numerical computations, we make some remarks on the error term of this approximation.

\begin{thm}[Boutet de Monvel and Zielinski, 2021]
  \label{thm:asymptotic}
  For any $\delta>0$, we have
\begin{equation}
  \label{eq:asymptotics}
  \lambda_n(H_{\pm})= n- g^2\mp (-1)^n  \Delta \frac{\cos(4 g\sqrt{n} -\frac{\pi}{4})}{\sqrt{2\pi g}} n^{-1/4} + O(n^{- 1/2+\delta})
\end{equation}
as $n \to \infty$.
\end{thm}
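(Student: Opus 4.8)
The plan is to remove the linear term $g(a+a^{\dagger})$ by passing to the displaced (polaron) frame and then to treat the reflection term as a perturbation whose \emph{effective} size on the $n$-th level decays like $n^{-1/4}$. Writing $D(\alpha)=e^{\alpha(a^{\dagger}-a)}$ for the displacement operator, one has $D(-g)^{\dagger}\bigl(a^{\dagger}a+g(a+a^{\dagger})\bigr)D(-g)=a^{\dagger}a-g^2$, and using $\hat T D(\alpha)=D(-\alpha)\hat T$ together with $\hat T=(-1)^{a^{\dagger}a}$ this conjugation turns $H_{\pm}$ into
\[
  \widetilde{H}_{\pm}:=D(-g)^{\dagger}H_{\pm}D(-g)=a^{\dagger}a-g^2\pm\Delta\,D(2g)\hat T .
\]
The unperturbed operator $a^{\dagger}a-g^2$ has the simple eigenvalues $n-g^2$ ($n\ge 0$) separated by unit gaps, and the perturbation $\pm\Delta D(2g)\hat T$ is bounded with norm $\Delta$. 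The crucial observation is that although $\Delta$ is fixed, the matrix elements of this perturbation on high Fock states are small, which is exactly what permits an asymptotic (rather than norm-small) perturbation expansion in the parameter $n^{-1/4}$.

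First I would extract the leading correction from first-order perturbation theory. In the Fock basis the diagonal matrix element is
\[
  \langle n|D(2g)\hat T|n\rangle=(-1)^n\langle n|D(2g)|n\rangle=(-1)^n e^{-2g^2}L_n(4g^2),
\]
using the classical formula $\langle n|D(\beta)|n\rangle=e^{-|\beta|^2/2}L_n(|\beta|^2)$ for the diagonal displacement matrix element. Applying the Fej\'er--Perron asymptotics for Laguerre polynomials, namely $e^{-x/2}L_n(x)=\pi^{-1/2}n^{-1/4}x^{-1/4}\cos\!\bigl(2\sqrt{nx}-\tfrac{\pi}{4}\bigr)+O(n^{-3/4})$ uniformly on compacta of $(0,\infty)$, at $x=4g^2$ yields exactly
\[
  e^{-2g^2}L_n(4g^2)=\frac{\cos(4g\sqrt n-\tfrac{\pi}{4})}{\sqrt{2\pi g}}\,n^{-1/4}+O(n^{-3/4}),
\]
which reproduces the stated leading term, the precise sign ($\mp$ versus $\pm$) and the exponent of $(-1)^n$ being fixed by matching the Fock index to the ordering convention for the spectrum of $H_{\pm}$.

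To recover the eigenvalue itself with the claimed error I would use a Feshbach--Schur / Rayleigh--Schr\"odinger reduction: isolating the rank-one spectral projection at Fock index $n$ and the resolvent on its orthogonal complement, the eigenvalue $\lambda$ near $n-g^2$ solves a scalar implicit equation whose expansion reads $\lambda-(n-g^2)=\pm\Delta(-1)^n e^{-2g^2}L_n(4g^2)+\Delta^2\sum_{m\ne n}\frac{|\langle m|D(2g)|n\rangle|^2}{n-m}+(\text{higher order})$. The first term is the leading asymptotic computed above, and its own error $O(n^{-3/4})$ is negligible against the target $O(n^{-1/2+\delta})$; consequently the entire error budget is carried by the second-order sum and the higher-order remainder. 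For the off-diagonal weights I would invoke $\langle m|D(2g)|n\rangle=\sqrt{n!/m!}\,(2g)^{m-n}e^{-2g^2}L_n^{(m-n)}(4g^2)$ and their asymptotics.

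The hard part will be the sharp bound on this remainder. The weights $|\langle m|D(2g)|n\rangle|^2$ are \emph{not} individually small: they spread over a window $|m-n|\lesssim\sqrt n$, and only after exploiting the near-symmetry of this profile about $m=n$ together with the oscillation of the trigonometric factor does the weighted sum $\sum_{m\ne n}|\langle m|D(2g)|n\rangle|^2/(n-m)$ collapse to size $O(n^{-1/2+\delta})$. Making this rigorous requires uniform asymptotics of the associated Laguerre polynomials $L_n^{(k)}(4g^2)$ with $k$ ranging up to order $\sqrt n$, which is the technical core of the argument; one must also verify that within a single parity sector the perturbed level stays isolated from its neighbours with gap bounded below for large $n$, so that the rank-one reduction is legitimate. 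An alternative, perhaps cleaner for the error analysis, is to note that in the Fock basis $H_{\pm}$ is precisely the Jacobi (tridiagonal) matrix of \eqref{QRM_TruncatedHamiltonians}, with diagonal $k\pm(-1)^k\Delta$ and off-diagonal $g\sqrt{k}$, and to apply discrete WKB / transfer-matrix asymptotics for unbounded Jacobi matrices directly to the associated three-term recurrence.
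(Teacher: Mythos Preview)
The paper does not prove this theorem: it is stated with attribution to Boutet de Monvel and Zielinski and invoked as input for the subsequent analysis, with no argument given beyond the citation. So there is no paper proof to compare against in the strict sense.

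That said, your sketch is a reasonable heuristic derivation of the leading term, and it is worth contrasting with what Boutet de Monvel and Zielinski actually do. Your displaced-frame computation of the diagonal matrix element and the Fej\'er--Perron asymptotics of $e^{-2g^2}L_n(4g^2)$ correctly produce the oscillatory $n^{-1/4}$ term; this part is standard and clean. However, the genuine content of the theorem is the error bound $O(n^{-1/2+\delta})$, and here your Feshbach--Schur outline is not yet a proof. The perturbation $\pm\Delta D(2g)\hat T$ has operator norm $\Delta$, not $o(1)$, and as you note its off-diagonal weights are supported on a window $|m-n|\lesssim\sqrt n$ with individual entries of size $n^{-1/4}$; the second-order sum $\sum_{m\ne n}|\langle m|D(2g)|n\rangle|^2/(n-m)$ therefore has no obvious decay from size alone, and the cancellation you invoke (near-symmetry of the profile about $m=n$) would have to be established uniformly in $k=m-n$ up to order $\sqrt n$, which is delicate. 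Controlling the higher-order Feshbach corrections and the spectral isolation condition simultaneously is the real obstruction.

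Boutet de Monvel and Zielinski bypass the displaced frame entirely and work directly with the tridiagonal (Jacobi) realisation \eqref{QRM_TruncatedHamiltonians}, which you mention only as an afterthought. Their method uses successive approximate diagonalisations by explicit unitary rotations tailored to the Jacobi structure, tracking the effect on both diagonal and off-diagonal entries through several iterations; this is what yields the sharp remainder. If you want to turn your sketch into a proof, the Jacobi route is the one that has been shown to close, and your perturbative approach would need substantial additional machinery (uniform associated-Laguerre asymptotics over the full window, plus a quantitative gap argument) before it could compete.
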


The theorem above immediately implies that
\begin{equation}
  \label{eq:spacing single parity}
  \lambda_{n+1}(H_\pm) - \lambda_n(H_\pm) \to  1\quad  \text{as } \quad n \to \infty
\end{equation}
and thus the energy spacing in each parity is bounded. It follows that the set $S(g,\Delta,0)$ is bounded above for
any $g,\Delta>0$.

We now consider a measure to describe the distribution of the energy spacing for each of the parities.
For $N\in \Z_{\geq 1}$ and $\alpha\geq 0$, define 
\begin{align*}
  M^{\pm}_N(\alpha)= M^{\pm}_N(\alpha;g,\Delta)&:= \# \{ n \leq N: \lambda_{n+1}(H_\pm)-\lambda_n(H_\pm)<\alpha \}, \\
  \alpha^{\pm}_0(g,\Delta) &:=\sup\limits_{n} \{ \lambda_{n+1}(H_\pm)-\lambda_n(H_\pm) \}.
\end{align*}

\begin{prop}
  \label{prop:measureParity}
  There is a measure $\mu_\pm$ in $[0,\infty)$ such that
  \[
    \lim_{N\to \infty}\frac{M^{\pm}_N(\alpha)}{M^{\pm}_N(\alpha_0^\pm)}= \int_0^\alpha \mu_\pm(x) dx.
  \]
  The measure $\mu_\pm(x)$ is given by the Dirac distribution centered in $1$, that is, 
  \[
    \mu_\pm(x) = \delta(x-1).
  \]
Note that each measure $\mu_\pm$ does not depend on the parameters $g,\Delta$ nor the parity. 
\end{prop}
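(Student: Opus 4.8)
The plan is to extract the whole statement from the single fact that the consecutive spacing within a fixed parity sector converges to $1$; once this is made quantitative, the claim becomes the standard assertion that the empirical distribution of a sequence converging to a limit concentrates, as a measure, at that limit. Concretely, writing $s_n^\pm := \lambda_{n+1}(H_\pm)-\lambda_n(H_\pm)$, I would first upgrade \eqref{eq:spacing single parity} to the quantitative estimate $s_n^\pm = 1 + O(n^{-1/4})$, and then translate this into convergence of the normalized counting functions $M_N^\pm(\alpha)/M_N^\pm(\alpha_0^\pm)$ to the Heaviside step at $\alpha=1$, i.e.\ to the distribution function of $\delta(x-1)$.

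For the first step I would subtract two instances of \eqref{eq:asymptotics}. The linear parts cancel to produce the leading $1$, and, writing $c_n^\pm := \mp(-1)^n \Delta (2\pi g)^{-1/2}\cos(4g\sqrt n-\tfrac\pi4)\,n^{-1/4}$ for the oscillatory correction, one gets $s_n^\pm = 1 + (c_{n+1}^\pm - c_n^\pm) + O(n^{-1/2+\delta})$. The key elementary observation is that the factor $(-1)^n$ flips sign between $c_n^\pm$ and $c_{n+1}^\pm$, so the two contributions \emph{add} rather than cancel; nonetheless each is bounded by a constant multiple of $n^{-1/4}$, whence $c_{n+1}^\pm - c_n^\pm = O(n^{-1/4})$. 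This dominates the remainder $O(n^{-1/2+\delta})$ and yields $s_n^\pm = 1 + O(n^{-1/4})$, so in particular for every $\eta>0$ only finitely many $n$ satisfy $|s_n^\pm - 1| \ge \eta$.

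Given this, the counting statement is immediate. Fix $\alpha \neq 1$ and choose $\eta < |\alpha - 1|$. If $\alpha < 1$, then for all large $n$ one has $s_n^\pm > 1-\eta > \alpha$, so $M_N^\pm(\alpha)$ is bounded independently of $N$ and $M_N^\pm(\alpha)/N \to 0$; if $\alpha > 1$, then for all large $n$ one has $s_n^\pm < 1+\eta < \alpha$, so $N - M_N^\pm(\alpha)$ is bounded and $M_N^\pm(\alpha)/N \to 1$. Since $\alpha_0^\pm = \sup_n s_n^\pm \ge 1$ and $s_n^\pm\to 1$, at most finitely many spacings can equal $\alpha_0^\pm$ (here one checks $\alpha_0^\pm>1$, or else argues directly), so $M_N^\pm(\alpha_0^\pm) = N - O(1)$ and the normalization contributes only a factor tending to $1$. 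Combining, $M_N^\pm(\alpha)/M_N^\pm(\alpha_0^\pm)$ tends to $0$ for $\alpha<1$ and to $1$ for $\alpha>1$, which is exactly $\int_0^\alpha \delta(x-1)\,dx$ at every continuity point of the limit. The parameter- and parity-independence is built in: the argument uses only $s_n^\pm \to 1$ and never the precise shape of $c_n^\pm$, so the limiting measure is $\delta(x-1)$ for all $g,\Delta>0$ and both signs.

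The computation is routine once Theorem \ref{thm:asymptotic} is granted, so the only points demanding care are the $O(n^{-1/4})$ bound on the spacing correction — note that it is genuinely larger than the error term $O(n^{-1/2+\delta})$ in \eqref{eq:asymptotics}, so it must be tracked explicitly — and the behavior of the normalizing denominator $M_N^\pm(\alpha_0^\pm)$, where one must ensure the supremum $\alpha_0^\pm$ is not approached by a non-negligible fraction of the spacings. I expect the latter to be the main (though still minor) obstacle, since it requires knowing that the spacings accumulate only at $1$ and hence that $\alpha_0^\pm$ exceeds $1$.
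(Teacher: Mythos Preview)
Your proposal is correct and follows essentially the same route as the paper: both derive from Theorem~\ref{thm:asymptotic} that the spacings $s_n^\pm$ eventually lie in $(1-\eta,1+\eta)$ for any $\eta>0$, and then read off the step-function limit. Your version is somewhat more careful than the paper's (you track the $O(n^{-1/4})$ bound explicitly and discuss the normalizing denominator $M_N^\pm(\alpha_0^\pm)$, which the paper passes over), but the underlying argument is the same.
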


\begin{proof}

  The asymptotic estimate \eqref{eq:asymptotics} suggests that for any $\eta>0$ there is $N \in \Z_{\geq0}$ such that for
  any $n>N$ we have 
  \[
    \lambda_{n+1}(H_\pm) - \lambda_n(H_\pm) \in (1 -  \eta, \, 1 + \eta). 
  \]

Hence, letting $N\to\infty$, we observe that the density distribution is a step-function of the form
  \[
    \lim_{N\to \infty}\frac{M_N^\pm(\alpha)}{M^\pm_N(\alpha_0^\pm)}= 
    \begin{cases}
      0   & \text{ for } \alpha < 1 \\
      c   & \text{ for } \alpha = 1 \\
      1   & \text{ for } \alpha > 1
    \end{cases},
  \]
  for some constant $0 \leq c \leq1$, and the result follows directly. We note that using the oscillating term in the asymptotic estimate \eqref{eq:asymptotics} it is not difficult to verify that $c=\frac12$. The distribution is neither right nor left continuous at $\alpha=1$.
\end{proof}

If we observe the full spectrum not restricted to each parity, consecutive energies may belong to different parities, that is, we may have an energy spacing of the form
\[
  \lambda_{n+1}(H_\pm) - \lambda_n(H_\mp).
\]
From Theorem \ref{thm:asymptotic}, we see that for large $n$ the value may be close to zero or to $1$, but it is difficult to determine in general due to the presence of the oscillatory and error terms. In the remainder of this section, using numerical experiments, we explore the properties of the energy distribution for the QRM and the AQRM.

Let us finish this subsection by considering the supremum  $\alpha^{\pm}_0(g,\Delta)$ of the energy spacing defined above. 
As already mentioned above, $\alpha_0^{\pm}$ is finite and the $G$-conjecture (see \cite{B2011PRL} or Section
\ref{sec:refinement} below) suggests $\alpha^{\pm}_0 \leq 3$.

In contrast with the distribution of the energy spacing, the $\alpha^{\pm}_0$ depends on the parameters $g,\Delta >0$ in a
non-trivial way. To get an idea of this dependence, we consider numerical
estimates of the parameter $\alpha^{\pm}_0$. In Figure~\ref{fig:alpha0_vs_g_Delta} we illustrate the value of $\alpha_0^+$ with
respect to the parameters $g,\Delta$. Actually, the numerical result suggests that $\alpha_0^{\pm} \leq 2$ and moreover that $\alpha_0^{\pm}$ is a
strictly decreasing (resp. increasing) function of $g$ (resp. $\Delta$).

\begin{figure}[h!]
\centering
\includegraphics[width=8cm,height=5cm]{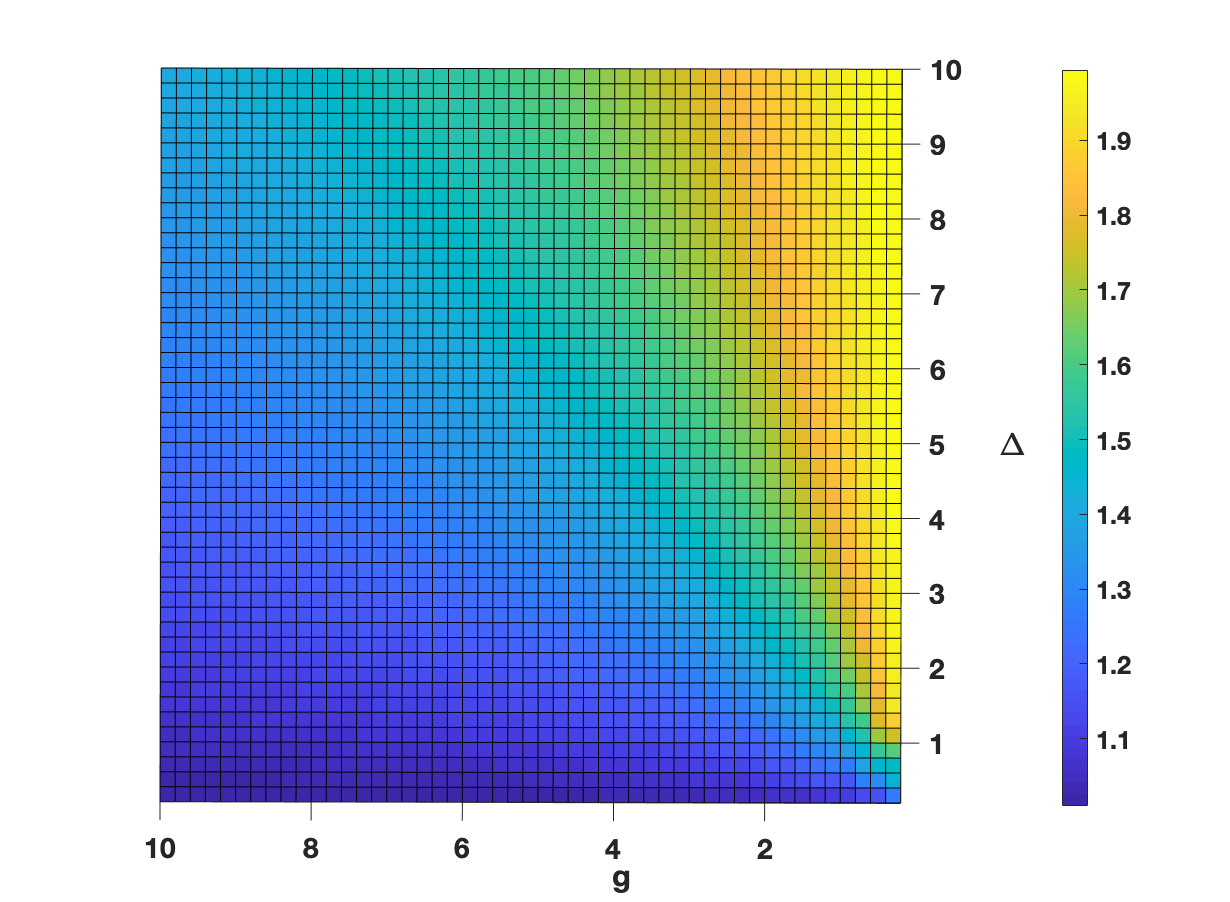}
\caption{Dependence of $\alpha_0^+$ on $g,\Delta \in (0,10]$.}
\label{fig:alpha0_vs_g_Delta}
\end{figure}

\subsection{Spacing distribution of the full spectrum}
\label{sec:fullspec}

The analysis of the energy spacing distribution for the AQRM is considerably more complicated since,
as discussed in Section \ref{sec2:EigenvalueComputation}, there is no known Hilbert space decomposition into
invariant parity subspaces so no estimates of the type of Theorem \ref{thm:asymptotic} are known.
Actually, as mentioned above, there is no known Hilbert space decomposition into invariant subspaces independently from $g, \Delta$ even in the case where  $\e$  is half-integer  \cite{A2020}, simply because $J_\ell$ depends explicitly on $g$ and $\Delta$ and 
\[
(g,\Delta)\not= (g',\Delta') \; \Rightarrow \;[J_\ell(g,\Delta),J_\ell(g',\Delta')]\neq 0.  
\] 

We will show now that there is nevertheless an analog of Proposition \ref{prop:measureParity} for the AQRM. First,
for $N \in \mathbb{Z}_{\geq 0}$ we define 
\begin{align}
  \alpha_0^{(\e)}(N) = \alpha_0^{(\e)}(N;g,\Delta)& :=\max_{n=1,2,\ldots,N-1} \{\lambda_{n+1}-\lambda_n | \lambda_n,\lambda_{n+1} \in \Spec(\AHRabi{\e}) \},\nonumber\\
  M^{(\e)}_N(\alpha) =  M^{(\e)}_N(\alpha:g,\Delta)&:=\# \{n\leq N \,|\, 0\leq \lambda_{n+1}-\lambda_n<\alpha\}, \quad \alpha\leq \alpha_0^{(\e)}(N).\nonumber
\end{align}
  
For the QRM, we obtain the following measure for the full spectrum, comprising both parities:

\begin{prop}
  \label{prop:measureQRM}
  There is a measure $\mu$ in $[0,\infty)$ such that
  \[
    \lim_{N\to \infty}\frac{M^{(0)}_N(\alpha)}{M^{(0)}_N(\alpha_0^{(0)}(N))}= \int_0^\alpha \mu_0(x) dx
  \]
  where $\alpha_0 = {\rm sup}(S(g,\Delta,0))$. Here, $\mu_0(x)$ is the measure given by
  \[
    \mu_0(x) = \frac12 \delta(x)+ \frac12 \delta(x-1).
  \]
\end{prop}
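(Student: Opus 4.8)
The plan is to reduce the full-spectrum statement to the single-parity asymptotics of Theorem~\ref{thm:asymptotic} by analysing how the two parity ladders $\Spec(H_+)$ and $\Spec(H_-)$ interleave after being merged into the sorted sequence $\lambda_1<\lambda_2\le\cdots$. The essential observation is that \eqref{eq:asymptotics} places \emph{both} $\lambda_n(H_+)$ and $\lambda_n(H_-)$ within a window of radius $O(n^{-1/4})$ centred at $n-g^2$: writing $s_n:=(-1)^n\Delta\,\tfrac{\cos(4g\sqrt n-\pi/4)}{\sqrt{2\pi g}}\,n^{-1/4}$, we have $\lambda_n(H_\pm)=n-g^2\mp s_n+O(n^{-1/2+\delta})$, so the two eigenvalues of level $n$ sit symmetrically about $n-g^2$ at distance $|s_n|+O(n^{-1/2+\delta})=O(n^{-1/4})$ from it.

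First I would fix a large $n_0$ and show that for $n\ge n_0$ the pair $\{\lambda_n(H_+),\lambda_n(H_-)\}$ forms a \emph{cluster} separated from its neighbours. Each cluster has diameter $O(n^{-1/4})$, while consecutive cluster centres $n-g^2$ and $(n+1)-g^2$ differ by exactly $1$; hence for $n$ large the interval containing the level-$n$ cluster is disjoint from those of levels $n\pm1$. This guarantees that, apart from the finitely many eigenvalues of either parity below $\lambda_{n_0}$, the merged sequence consists of consecutive blocks of exactly two eigenvalues, the $n$-th block being $\{\lambda_n(H_+),\lambda_n(H_-)\}$.

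Next I would classify the spacings of the merged sequence according to this block structure. A spacing internal to a block equals $|\lambda_n(H_+)-\lambda_n(H_-)|=2|s_n|+O(n^{-1/2+\delta})=O(n^{-1/4})\to0$, while a spacing joining the top of block $n$ to the bottom of block $n+1$ equals $1-|s_n|-|s_{n+1}|+O(n^{-1/2+\delta})\to1$. Since internal and joining spacings alternate along the sequence, up to the full-spectrum index $N$ there are, to within a bounded additive error coming from the low-energy exceptions, $N/2$ spacings tending to $0$ and $N/2$ tending to $1$. Fixing $\eta\in(0,1)$, this yields $M^{(0)}_N(\eta)/N\to\tfrac12$ and $M^{(0)}_N(\alpha)/N\to1$ for every $\alpha>1$; together with $M^{(0)}_N(\alpha_0^{(0)}(N))=N+O(1)$ (the normaliser counts all but the finitely many maximal spacings) this gives the asserted limit $\int_0^\alpha\mu_0(x)\,dx$, equal to $\tfrac12$ on $(0,1)$ and to $1$ on $(1,\infty)$, whence $\mu_0(x)=\tfrac12\delta(x)+\tfrac12\delta(x-1)$.

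The main obstacle is the rigorous control of the ordering used in the second step: one must rule out that a level-$(n+1)$ eigenvalue of one parity drops below a level-$n$ eigenvalue of the other, and more generally confirm that the merge is precisely the block pattern above rather than some shuffled order. This is where the uniformity of the error term $O(n^{-1/2+\delta})$ in \eqref{eq:asymptotics} is needed, to make the estimates ``cluster diameter $=O(n^{-1/4})$, cluster gap $=1$'' quantitative for all $n\ge n_0$ simultaneously. A secondary, purely bookkeeping point is the behaviour at the atoms $\alpha=0$ and $\alpha=1$: the limit is claimed only as the distribution function at its continuity points $\alpha\in(0,1)\cup(1,\infty)$, and the precise value of the jump constant (the $c$ appearing in the proof of Proposition~\ref{prop:measureParity}) is irrelevant to identifying $\mu_0$.
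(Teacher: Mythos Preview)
Your proposal is correct and follows essentially the same route as the paper's proof: both use Theorem~\ref{thm:asymptotic} to show that for large $n$ the two parity eigenvalues $\lambda_n(H_\pm)$ lie in an $O(n^{-1/4})$ window about $n-g^2$, so that the merged spectrum decomposes into well-separated two-element clusters whose internal spacings tend to $0$ and whose gaps tend to $1$, and then obtain $c=\tfrac12$ from the strict alternation of these two types. The paper phrases the alternation via a short parity case analysis rather than the explicit block/cluster language you use, but the content is identical.
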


\begin{proof}
  Let $\eta>0$, then by Theorem \ref{thm:asymptotic} above, there is an $N \in \Z_{\geq 0}$ such that for $n>N$, we have
  \[
    \lambda_{n+1}- \lambda_n \in (0,\eta) \cup (1 -  \eta, \, 1 + \eta) 
  \]    
  and the same argument used in the proof of Proposition \ref{prop:measureParity} shows that
    \[
    \mu_0(x) = c \delta(x) + (1-c)\delta(x-1),
  \]
  for some constant $0<c <1$. To show that $c = \frac12$ it is suffices to show that for large $n$,
  the energy spacing alternates between the intervals $(0,\eta)$ and $(1 -  \eta, \, 1 + \eta)$ .

  To see that for $n>N$, let us first assume that
  \[
  \lambda_{n+1}- \lambda_n \in (0,\eta).
  \] 
  Then, it follows from Theorem \eqref{thm:asymptotic} that $\lambda_{n+1}$ and $\lambda_n$ are of different parity. Namely, there exists $m\in\N$ such that $\lambda_{n+1}=\lambda_m(H_\pm)$ and $\lambda_{n}=\lambda_m(H_\mp)$. This implies that $\lambda_{n+2}=\lambda_{m+1}(H_+)$ or $\lambda_{n+2}=\lambda_{m+1}(H_-)$. It follows, in particular that $\lambda_{n+2}- \lambda_{n+1} \in (1-\eta, 1+\eta)$.
  
  Next we assume that 
  \[
  \lambda_{n+1}- \lambda_n \in (1-\eta, 1+\eta).
  \] 
  Then, there exists $m\in\N$ such that either of the following cases hold:
  \begin{itemize}
  \item $\lambda_{n+1}=\lambda_{m+1}(H_\pm)$ and $\lambda_{n}= \lambda_{m}(H_\pm)$.
   \item $\lambda_{n+1}=\lambda_{m+1}(H_\pm)$ and $\lambda_{n}= \lambda_{m}(H_\mp)$.
  \end{itemize}
  For each case, we have $\lambda_{n+2}=\lambda_{m+1}(H_\mp)$.  Hence,  it follows that $\lambda_{n+2}- \lambda_{n+1} \in (0, \eta)$.
  This completes the proof. 
  \end{proof}
  
\begin{cor}
  \label{cor:Alternating}
  For a sufficiently small $\eta>0$ define 
  $$
  D_\eta(N; g,\Delta): =  \frac{ \# \{n\leq N \, | \, \lambda_{n+1}-\lambda_{n}  \in (0,\eta)\}}
  {\# \{n\leq N \, | \, \lambda_{n+1}-\lambda_n \in (1-\eta, 1+\eta)\}}.
$$
Then we have $\lim_{N\to\infty}D_\eta(N; g,\Delta)=1$. 
\end{cor}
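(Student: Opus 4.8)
The plan is to read the corollary off directly from the proof of Proposition \ref{prop:measureQRM}, whose central content is precisely an \emph{alternation} statement for the large-$n$ spacings of the full QRM spectrum. First I would fix $\eta \in (0,\tfrac12)$ so that the two intervals $(0,\eta)$ and $(1-\eta,1+\eta)$ are disjoint; this is what ``sufficiently small'' should mean, and it guarantees that no single spacing is counted in both the numerator and the denominator of $D_\eta(N;g,\Delta)$.

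The key structural input, already established inside the proof of Proposition \ref{prop:measureQRM}, is the following: there is an integer $N_0 = N_0(\eta,g,\Delta)$ such that for every $n > N_0$ one has $\lambda_{n+1}-\lambda_n \in (0,\eta) \cup (1-\eta,1+\eta)$, and moreover the membership strictly alternates between the two intervals as $n$ increases (a spacing in $(0,\eta)$ is always followed by one in $(1-\eta,1+\eta)$ and conversely). I would record this as a one-line lemma, citing the two case analyses in that proof. Writing $A(N)$ for the numerator count $\#\{n\le N : \lambda_{n+1}-\lambda_n\in(0,\eta)\}$ and $B(N)$ for the denominator count, the alternation immediately gives that over the range $N_0 < n \le N$ the two counts differ by at most $1$; together with the bounded contribution of the finitely many indices $n \le N_0$, this yields $|A(N) - B(N)| \le C$ for a constant $C$ independent of $N$.

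It then remains to check that the denominator diverges, so that the bounded difference washes out. Since the spacings alternate for $n > N_0$, exactly one of every two consecutive indices lands in $(1-\eta,1+\eta)$, whence $B(N) \ge \tfrac12(N - N_0) - 1 \to \infty$ as $N \to \infty$. Consequently
\[
  D_\eta(N;g,\Delta) = \frac{A(N)}{B(N)} = 1 + \frac{A(N)-B(N)}{B(N)}, \qquad \left|\frac{A(N)-B(N)}{B(N)}\right| \le \frac{C}{B(N)} \longrightarrow 0,
\]
so the limit equals $1$, as claimed. (Equivalently, this is just the restatement that the two atoms of $\mu_0$ in Proposition \ref{prop:measureQRM} carry equal mass $\tfrac12$.)

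The proof is therefore essentially immediate once the alternation property is in hand, and that property has already been carried out in Proposition \ref{prop:measureQRM}; the only genuine work lies in cleanly importing and citing it. The single point requiring care is the off-by-one bookkeeping: alternation controls $A(N)-B(N)$ only up to the additive constant $C$ (arising from the parity of the number of indices in the alternating block and from the finite initial segment $n \le N_0$), but since $B(N)\to\infty$ this constant is harmless. I do not expect any substantive obstacle beyond this.
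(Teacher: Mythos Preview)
Your proposal is correct and follows essentially the same approach as the paper: both deduce the result directly from the alternation of large-$n$ spacings established in the proof of Proposition~\ref{prop:measureQRM}. The paper's proof is a one-sentence appeal to that alternation yielding equidistribution, whereas you spell out the off-by-one bookkeeping and the divergence of the denominator; your version is more detailed but not substantively different.
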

\begin{proof}

Since the energy spacing alternates between the intervals $(0,\eta)$ and $(1 -  \eta, \, 1 + \eta)$ as observed in the proof of Proposition  \ref{prop:measureQRM}, the energy spacing is equidistributed in the two intervals. 
\end{proof}

Proposition \ref{prop:measureParity} suggests that the limiting distribution (measure) for the level spacing does not
depend on the parameters $g,\Delta>0$. To consider the effect of the bias parameter $\e$, in Figure \ref{fig:ES5} we show
the scatter plot of $S(g,\Delta,\e)$ for different bias parameters $\epsilon$ but with other parameters unchanged.

\begin{figure}[ht!]
\begin{center}
    \includegraphics[scale=0.38]
    {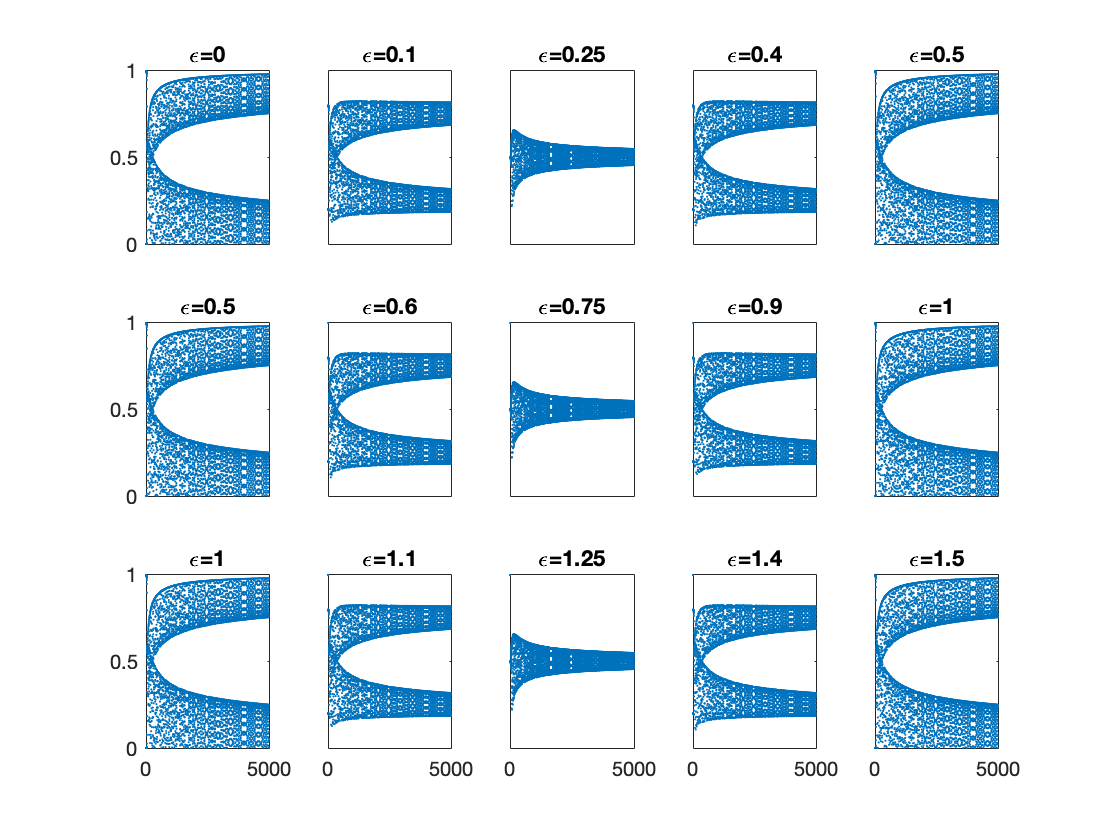}
\end{center}
    \caption{Scatter plot of energy spacing AQRM with $g=5$, $\Delta=5$ and $\epsilon$ is tuned from $0$ to $1.5$}
    \label{fig:ES5}
\end{figure}

We see that the spacing distribution for all $\epsilon \in \frac12 \Z$ shows the same pattern. This is expected since the study of this case (e.g. \cite{KRW2017}) has shown that with the exception of a small number of low-lying eigenvalues, the spectral structure (including the existence of a symmetry operator) is similar to the QRM. But we see that the periodicity in $\e$ with period $1/2$ extends to the general case $\epsilon \notin \frac12 \Z$. That is, the general shape of the energy spacing graphs corresponding to  the models with bias $\epsilon + \frac{n}{2}$ are similar for any $n \in \Z$.  The numerical results obtained here show that even in the general case, the spectral structure, at least for high level numbers, is similar for all bias parameters of the form $\epsilon + \frac{n}{2}$. We further discuss this periodicity in Section \ref{sec:refinement}. We remark that we verified that the general shape of the graphs and behavior observed in Figure~\ref{fig:ES5} holds for a number of choices of parameters $g, \Delta$.

To consider the maximum possible energy spacing, we define
\[
  \alpha_0(g,\Delta,\e) := {\rm sup} \, S(g,\Delta,\e).
\]
We have shown that $\alpha_0(g,\Delta,0)< \infty$ and the numerical experiments suggests that actually $\alpha_0(g,\Delta,\e) \leq 1$. Moreover, from numerical experiments it appears that $\alpha_0(g,\Delta,\e)\leq 1$ for any set of parameters.

It is important to notice that for fixed parity (if parity exists, i.e. $\e \in \frac12 \Z$), the spacing may be larger than $1$. Since
$\alpha_0(g,\Delta,0) \leq 1$, it follows that  whenever we have
\[
  \lambda_{n+1}(H_\pm) - \lambda_n(H_\pm) > 1,
\]
then there must be an eigenvalue $\lambda_m(H_\mp)$ such that
\begin{align}
  \label{eq:mixedcrossing}
  \lambda_{n}(H_\pm) \leq \lambda_m(H_\mp) \leq  \lambda_{n+1}(H_\pm).
\end{align}
Note that by Theorem \ref{thm:asymptotic}, when $n$ is sufficiently large, $m$ should be equal to $n$ or $n+1$. Here, the value of $m$ may also depends on the parity (even or odd) of $n$. This discussion is, of course, consistent with the Weyl law \cite{RW2021} for each parity (see also \cite{R2020}). This further shows that even for small eigenvalues (i.e., even for a finite number of small integers $n$), there is not much variation in the possible values of number $m$ once $n$ is given. This fact may also support the G-conjecture in \cite{Braak2019} (see Subsection \ref{sec:remarksG}).

In fact, note that for fixed $\delta>0$ and large enough $n$,
\[  
    \lambda_{n+1}(H_+) - \lambda_{n}(H_+) + \lambda_{n+1}(H_-) - \lambda_{n}(H_-) = 2 +  O(n^{-1/2+\delta}) 
\] 
so we cannot determine the order of the eigenvalues and the type of the spacing only from the asymptotic estimates.

Next, for the half-integer bias case, we consider the parities to make a more detailed analysis of the energy
distribution. However, due to the considerations in Section \ref{sec2:EigenvalueComputation}, we consider only the symmetric case here $\e=0$.  We note here that the ground state is simple and has positive parity \cite{HH2012,HHL2014}.

\begin{dfn}
  \label{dfn:spacingClassification}
  For $\e \in \frac12 \Z$. We say that the energy spacing
  \(
    \lambda_n - \lambda_{n-1} \in S(g,\Delta,\e)
  \)
  is of
  \begin{enumerate}
  \item {\em positive type} if $\lambda_n,\lambda_{n-1} \in H_+$,
  \item {\em negative type} if  $\lambda_n,\lambda_{n-1} \in H_-$,
  \item {\em mixed type} in any other case.
  \end{enumerate}
  In other words, $S(g,\Delta,\e)$ is the disjoint union
  \[
    S(g,\Delta,\e) = S_+(g,\Delta,\e) \cup S_-(g,\Delta,\e) \cup S_m(g,\Delta,\e),
  \]
  where each of the sets $S_\pm(g,\Delta,\e)$ and $S_m(g,\Delta,\e)$  correspond to the energy spacing of positive, negative, and mixed types.
\end{dfn}

\begin{figure}[h!]
  \begin{center}
\includegraphics[scale=0.18]{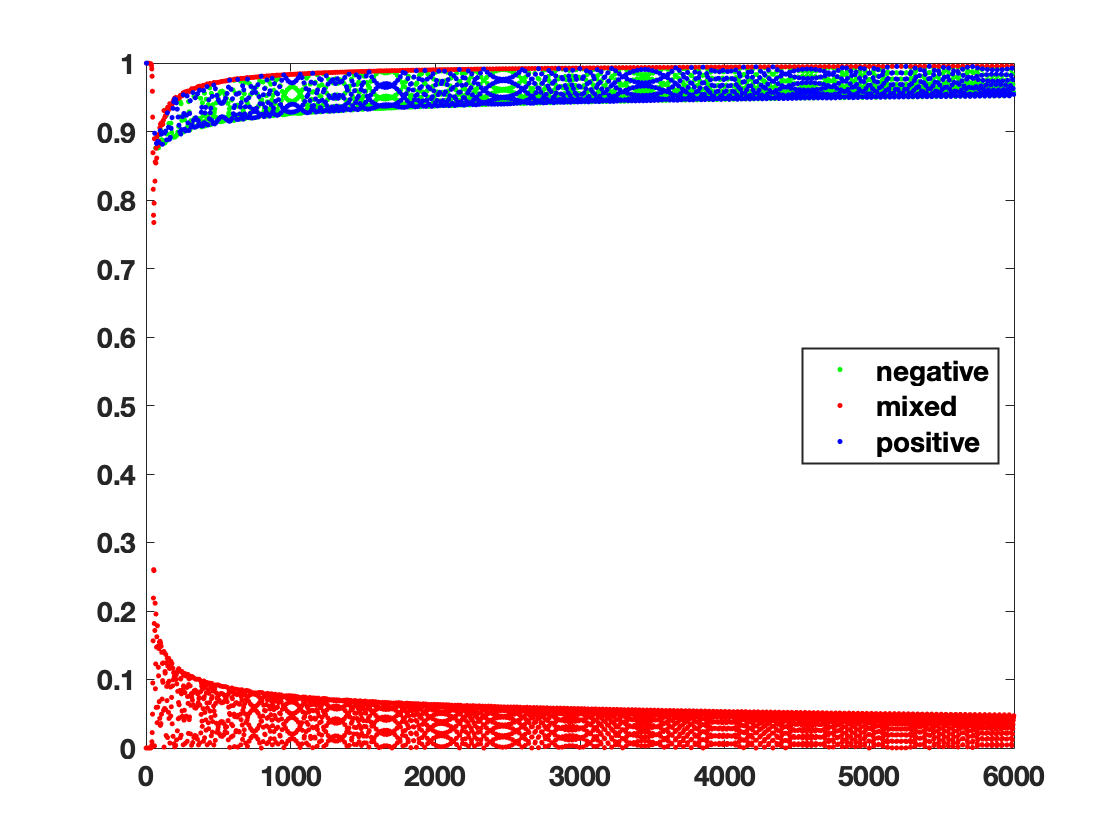}\\
\caption{Energy spacing for the QRM with $g=5, \Delta=1$. Consecutive energy spacing of positive type (blue), negative type (green), and mixed type (red).}
\label{fig:Distinguish_ES_on_Parities}
\end{center}
\end{figure}

In Figure \ref{fig:Distinguish_ES_on_Parities} we show the energy spacing for a specific QRM with points colored according to the classification of Definition \ref{dfn:spacingClassification} for a finite number of energy points. We observe that the number of spacing points of positive and negative type are approximately equal. Also, we see that the number of mixed type spacing points appear to exceed the sum of positive and negative type spacing points. Clearly, at least one of the sets $S_\pm(g,\Delta,\e)$ and $S_m(g,\Delta,\e)$ must be infinite, so in order to describe the distribution of spacing points of each type  as $N \to \infty$  we define
\begin{equation}
  \label{eq:density1}
 d(N;g,\Delta) := \frac{ \# \{n\leq N \, | \, \lambda_n,\lambda_{n-1} \in H_{+} \text{ or } \lambda_n,\lambda_{n-1} \in H_{-} \} }{N},
\end{equation}
and
\begin{equation}
  \label{eq:density2}
   r(N;g,\Delta) :=  \frac{ \# \{n\leq N \, | \, \lambda_n,\lambda_{n-1} \in H_{-}\}}{\# \{n\leq N \, | \, \lambda_n,\lambda_{n-1} \in H_{+}\}}.
\end{equation}

To understand the effect of the system parameters on the functions \eqref{eq:density1} and \eqref{eq:density2}, we compute them for different parameter combinations. Figure~\ref{fig:energy_spacing_type_proportion} shows the ratios when $N$ increases. We note that the density function depends on the value of $g$ and that the parameter $\Delta$ does not have a significant effect. Concretely, we observe that $d(N;g,\Delta)$ is a monotonically increasing function of $N$. Moreover, when we fix parameter $N$, this function is decreasing with respect to $g$. 

We can approximate the function $d(N;g,\Delta)$ by 
\begin{equation}
  \label{eq:es_type_proportion_fitting_function}
  \frac{\text{\rm {arctan}}(a \log_{10}(N)+ b)}{\pi} +c,
\end{equation}
where the parameters $a=a(g,\Delta), b=b(g,\Delta)$ can be estimated from the calculated data via the curve fitting method.
Their values for three chosen models, all with $\Delta=1$, are given in Table~\ref{tab:energy_spacing_type_proportion_fitting_result}. To access the goodness of the fit, we use the sum squared error (SSE), coefficient of determination ($R$-square), and root-mean-square error (RMSE). We observe that the ratios tend to $0.5$ as $N$ approaches infinity for all three choices of parameters.

\begin{figure}[h!]
\begin{center}
\includegraphics[scale=0.2]{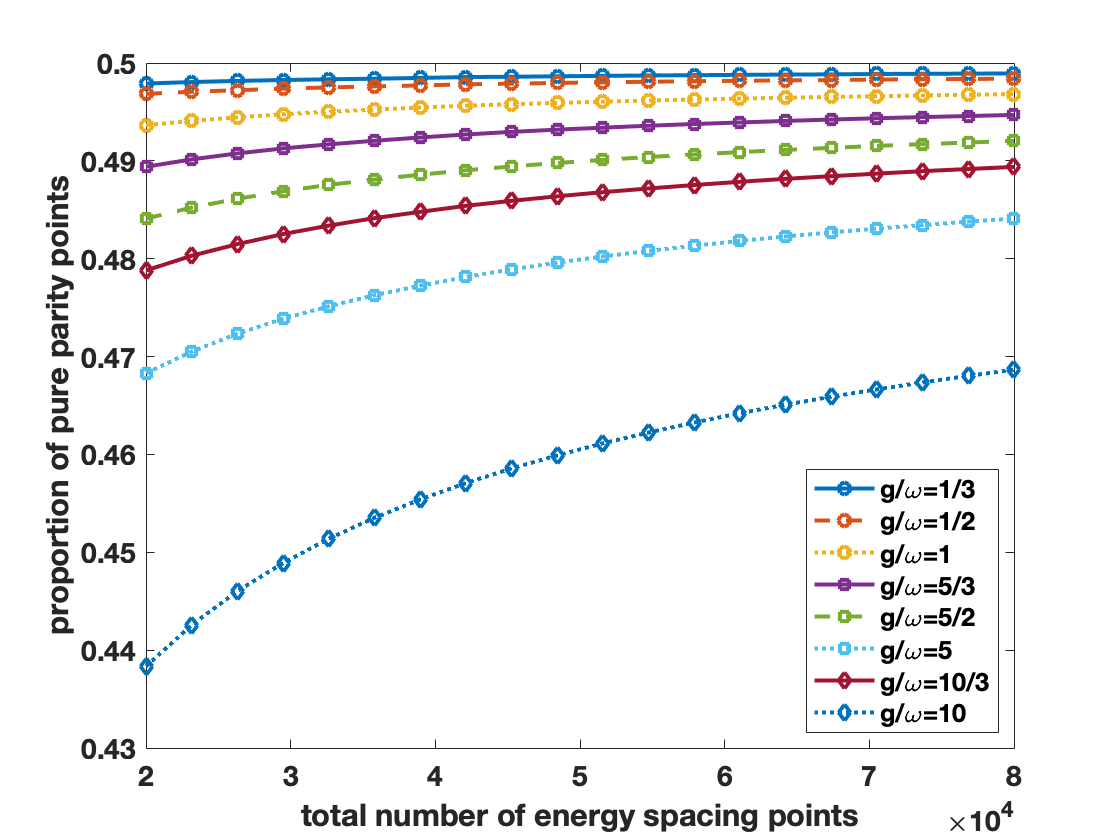}
\caption{Proportion of the number of adjacent levels with the same parity to the total number of levels.}
\label{fig:energy_spacing_type_proportion}
\end{center}
\end{figure}

\begin{table}[!ht]
\centering
      \begin{tabular}{cccccccc}
        \hline
        $g$ &$a$&$b$&$c$&SSE&$R$-square&RMSE\\ 
        \hline
        $1/3$&$21.67$&$-42.15$&$0.004$&$9.602\times 10^{-8}$&$0.9458$&$7.515 \times 10^{-5}$ \\ 
        $5/2$&$31.46$&$-115.9$&0&$9.773\times 10^{-7}$&$0.9905$&$0.000233$\\ 
$10$&$7.922$&$-29.12$&$0$&$1.374\times 10^{-5}$&$0.9909$&$0.0008735$\\ 
      \end{tabular}
 \caption{Parameters of the fits according to formula \eqref{eq:es_type_proportion_fitting_function} for $\Delta=1$ and various values of $g$.}
\label{tab:energy_spacing_type_proportion_fitting_result}
\end{table}

From the numerical data and the asymptotic behavior of the fitting function, we expect that
\begin{equation}
    \label{eq:densityExpect1}
    \lim_{N\to \infty} d(N;g,\Delta) = \frac12, \qquad \lim_{N\to \infty} r(N;g,\Delta) = 1
\end{equation}
hold independent of the parameters. Namely, the fraction of level spacing of positive or negative type is
the same as that of mixed type. Similarly, the fraction of energy spacing of  positive type is the same as that of
negative type. In particular, all three sets $S_\pm(g,\Delta,\e)$ and $S_m(g,\Delta,\e)$ are infinite. The results \eqref{eq:densityExpect1} were proven mathematically by Rudnick in \cite{R2023} as a corollary of a density one version of the G-conjecture (cf. Section \ref{sec:refinement}).

We also note here that the two limits in \ref{eq:densityExpect1} are not equivalent, in other words, neither one of the two necessarily implies the other. In Section \ref{sec:refinement}, we consider the general case for the AQRM based on the numerical results in this section.

\begin{rem}
  The density result of Corollary \ref{cor:Alternating} should not be confused with the density described by the ratio described  by the function $d(N;g,\Delta)$ and the corresponding density at the expected limit
  \[
    \lim_{N\to \infty} d(N;g,\Delta) = \frac12.
  \]
  In particular, despite the similarity of both densities when observed in the numerical graph for the QRM in Figure \ref{fig:Distinguish_ES_on_Parities}, we remark the result of Corollary \ref{cor:Alternating} does not consider the type (parity) of eigenvalue (see the discussion in the proof of Proposition  \ref{prop:measureQRM}).
\end{rem}

\begin{rem}
  The $\frac14$-symmetry may be partially explained by the pole structure of the $G$-function
  $G^{(\e)}(x;g,\Delta)$ \cite{KRW2017} of the AQRM. The main property of the $G$-function is that if
\[
    G^{(\e)}(x;g,\Delta) = 0,
  \]
then $x+g^2$ is an eigenvalue of the AQRM for the parameter $g$ and $\Delta$.  It is well known that the $G$-function has poles at the points $x= N \pm \e$ (see e.g., \cite{KRW2017}) and when there is a Juddian eigenvalue these singularities are ``lifted'' (see \cite{K1985JMP} for the QRM case and \cite{LB2015JPA,KRW2017} for the general AQRM case). We illustrate this situation in Figure \ref{fig:gfunct_graph3}. Note that the distances between consecutive poles are of length $2\{\e\}$ and $1-2\{\e\}$, so that they are the same for $\e=\frac18$ and $\e=\frac38$, and both distances coincide only for
$\e \in \frac14 + \frac12 \Z$. 

\begin{figure}[htb]
  ~
  \begin{subfigure}[b]{0.45\textwidth}
    \centering
    \includegraphics[height=3.25cm]{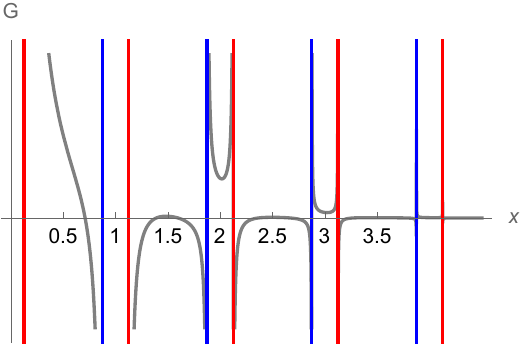}
    \caption{\(\e = \tfrac18\)}
  \end{subfigure}
  ~
  \begin{subfigure}[b]{0.45\textwidth}
    \centering
    \includegraphics[height=3.25cm]{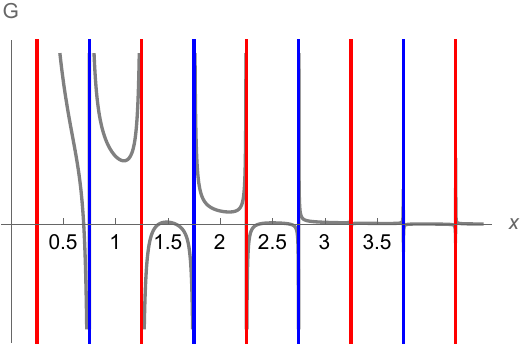}
    \caption{\(\e = \tfrac14\)}
  \end{subfigure}
  \\
  \centering
  \begin{subfigure}[b]{0.45\textwidth}
    \centering
    \includegraphics[height=3.25cm]{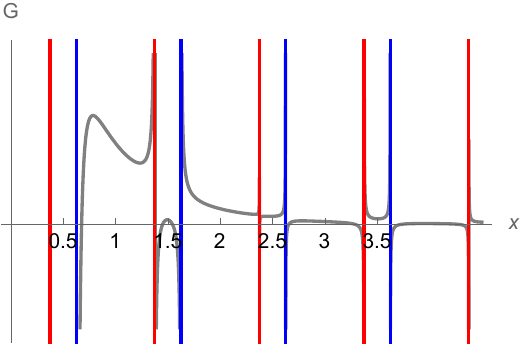}
    \caption{\(\e = \tfrac38\)}
  \end{subfigure}
  ~
    \begin{subfigure}[b]{0.45\textwidth}
    \centering
    \includegraphics[height=3.25cm]{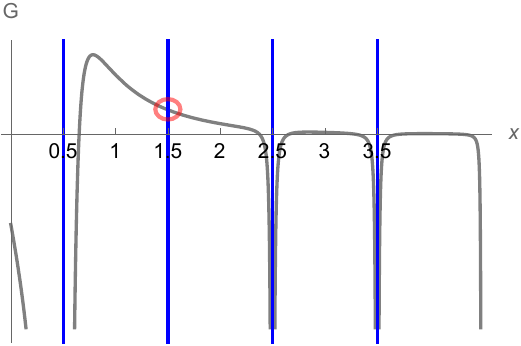}
    \caption{\(\e = \tfrac12\)}
  \end{subfigure}
  \caption{Plot of \(G^{(\e)}(x;g,\Delta)\) for parameters \(g=\tfrac12\) and \(\Delta=1\) for different $\e$ with poles lines indicated $n+\e$(red) and $n-\e$ (blue). Note the ``lifted'' Juddian eigenvalue for the case $\e=\tfrac12$ (red circle).}
  \label{fig:gfunct_graph3}
\end{figure}

\end{rem}

\subsection{Density functions for the spacing distribution}
\label{sec:density}

In this section, we introduce numerical density functions that approximate the actual distribution of the energy spacing for the full spectrum 
of the AQRM.

For every $N>0$, define a partition of $[0,\alpha_0^{(\e)}(N)]$ of the form
\[
    0=\alpha_1<\alpha_2<\ldots<\alpha_{L-1}<\alpha_L=\alpha_0^{(\e)}(N),
\]
then, the numerical density functions $f_N^{(\e)}(\alpha_l;g,\Delta)$ is  defined by
\begin{equation*} 
    f_N^{(\e)}(\alpha_l;g,\Delta)= \dfrac{M^{(\e)}_N(\alpha_{l+1})-M^{(\e)}_N(\alpha_{l})}{(\alpha_{l+1}-\alpha_{l})N},
\end{equation*}
for $l=1,2,\ldots,L-1$.

\begin{figure}[!ht]
\begin{center}
    \includegraphics[scale=0.19]{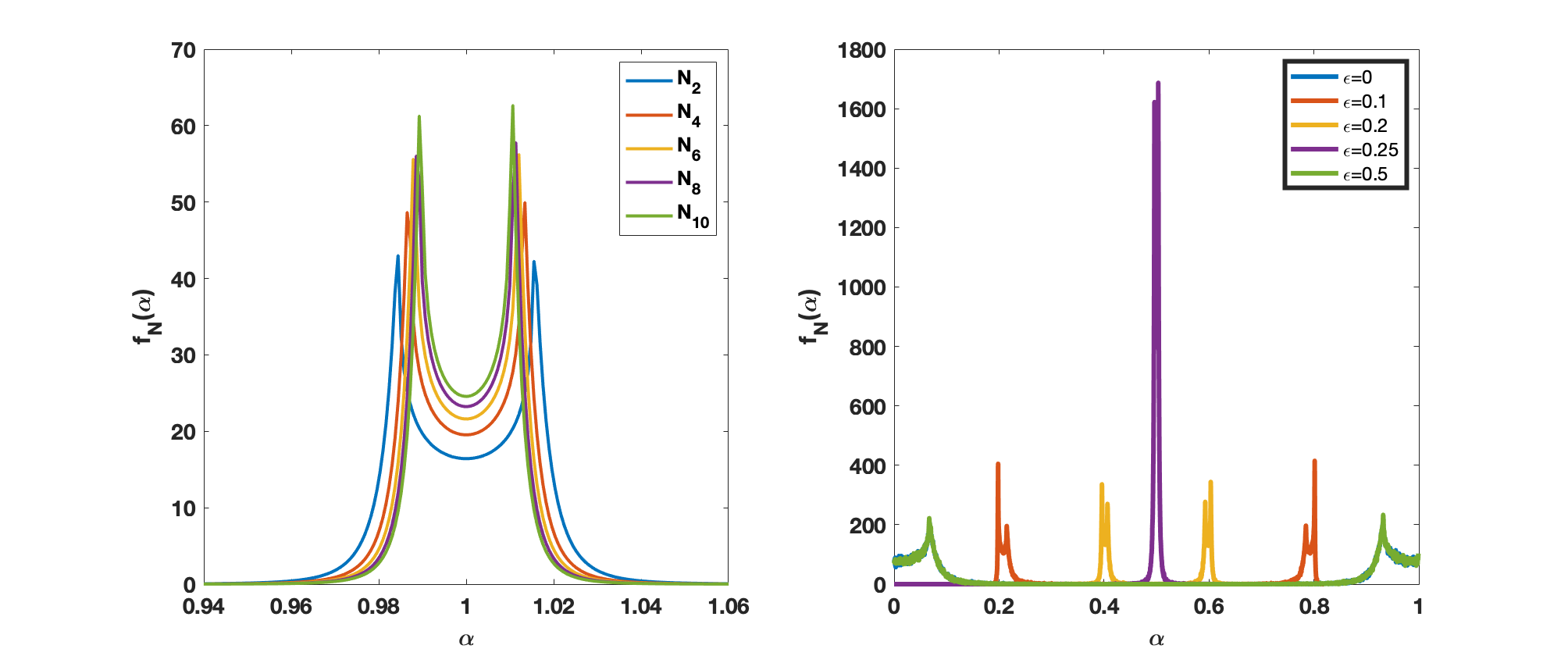}
  \end{center}
      \caption{(Left) Density of energy spacing for positive parity of the QRM as a function of truncation threshold $N$ with $N_k = 3.5 k \times 10^6$; 
    (Right) The density of energy spacing for QRM and AQRM with different bias values.} 
\label{fig:truncated_density_vs_N}
\end{figure}

As an example, let us consider the case of fixed parity within the QRM. The left panel of Figure \ref{fig:truncated_density_vs_N} shows the graph of the numerical density of energy spacing for positive parity for $g=\Delta=1$. Here $N= N_0 \times i$, where $i=1,2,...,10$ and $N_0 = 3,500,000$. The first $40,000$ eigenvalues are calculated using exact diagonalization, and the larger eigenvalues are approximated using \eqref{eq:asymptotics}. In particular, this illustrates the approximation of the delta measure in the proof of Proposition \ref{prop:measureParity}.

The right panel in Figure \ref{fig:truncated_density_vs_N} compares the numerical density function of AQRM with different bias values. We note that the asymptotic periodicity discussed in the previous section is reflected here, as illustrated in Figure \ref{fig:density_function_wrt_epsilon_second_parameter}. In the right panel, the first $40,000$ eigenvalue spacings are collected into $2,000$ subdivisions of $[0,1]$ of equal length.

\begin{figure}[!ht]
    \begin{center}
        \includegraphics[scale=0.31]{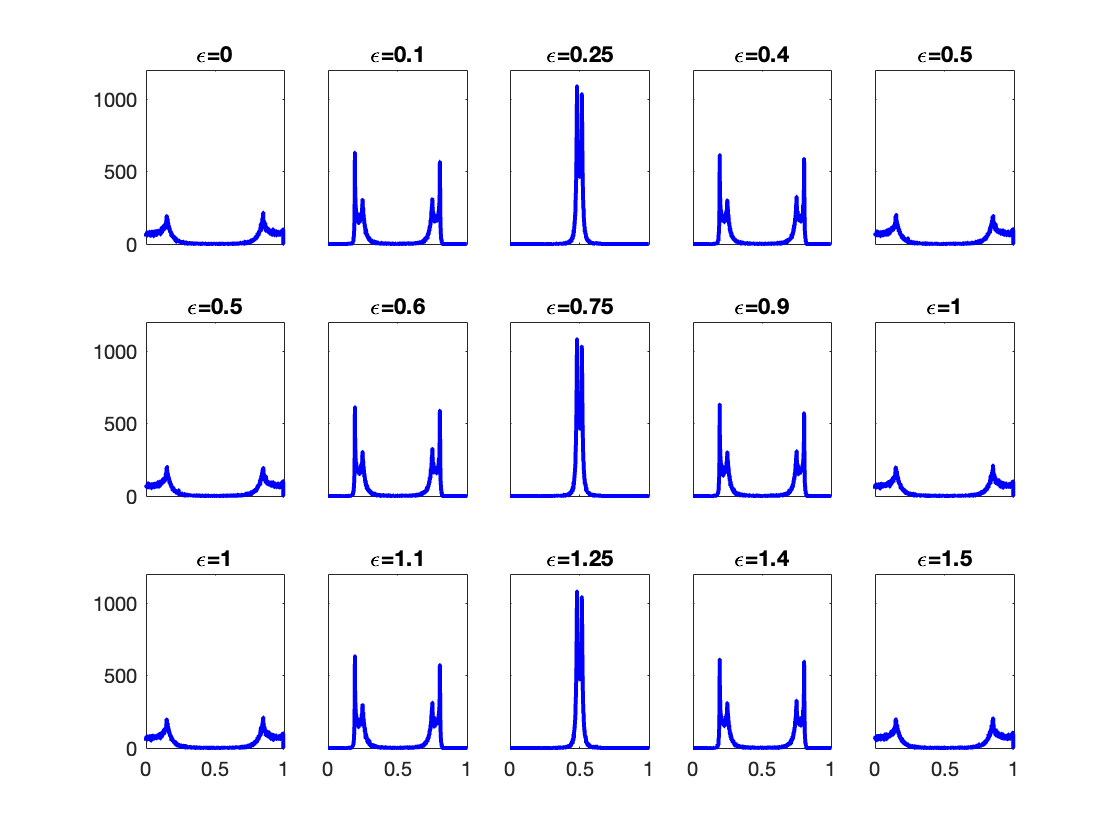}
        \caption{Numerical density functions with respect to changing $\epsilon$ ($g=5$, $\Delta=5$)}
    \label{fig:density_function_wrt_epsilon_second_parameter}
    \end{center}
\end{figure}

From the structure of the AQRM for half-integer bias and the $\frac12$-periodicity observed before, 
the measure $\mu_{\frac{\ell}2}(x)$ of the energy spacing is expected to have the form
\[
   \mu_{\frac{\ell}2}(x) = c_1 \delta(x)+ c_2 \delta(x-1)
\]
for real constants  $c_1$, $c_2$ with $c_1+c_2=1$. The general case and the values of the constants
are discussed in the next sections, along with results and conjectures on the eigenvalue distribution for
the AQRM.

We conclude this subsection by noting another symmetry with respect to $\e=\frac{n}{4}$ for any fixed natural number
$n$. Figure \ref{fig:density_function_wrt_epsilon_second_parameter} shows that the energy spacing distribution for $\epsilon = \frac{n}{4} + \eta $ and $\epsilon = \frac{n}{4} - \eta $ for  $0 \leq \eta < \frac14$ resemble each other. To the best of our knowledge, this type of symmetry has not been previously observed. In the next section, we describe with more detail the symmetric behavior of the energy spacing under certain assumptions.

\subsection{Numerical cumulative distribution}
\label{sec:cumulative}

In this section we consider numerically the cumulative distribution of energy spacing of the AQRM to attempt to understand the density measure of the AQRM for general $\epsilon \neq 0$. From the numerical observations of Section \ref{sec:density}, 
it is natural to expect that the density measure is given as a sum of delta distributions centered around certain critical values, similar to the  QRM case. In addition to finding the critical values, the use of the numerical cumulative distribution provides an analogy to the hyperbolic Laplacian  discussed in the introduction (cf. equation \eqref{eq:Laplace}).

Accordingly, we define $h_\e(N;\alpha)$ by
\[
  h_\e(N;\alpha) = \frac{M^{(\e)}_N(\alpha)}{N},
\]
to study the behavior of $\lim_{N \to \infty} h_\e(N;\alpha)$ as $N \to \infty$  and $0 \leq \alpha \leq \alpha_0^{(\e)}$.  

From the numerical computations, we identify that $h_\epsilon(N;\alpha)$ has two critical points
\[
  \alpha_1 = \min(\{2 \epsilon\},1-\{2 \epsilon\}), \qquad \alpha_2 =  \max(\{2 \epsilon\},1-\{2 \epsilon\}),
\]
corresponding to the critical values at the peaks of Figure \ref{fig:density_function_wrt_epsilon_second_parameter} for $\e \not\in \frac12\Z$ in  Section \ref{sec:density}. Here, $\{x\}$ denotes the fractional part of $x\in\R$.
Note that when $\e \in \frac14 + \frac12 \Z$, the two critical values coincide, that is $\alpha_1 = \alpha_2$, and the function has a unique critical value at $\alpha_1 = \frac12$. The plot of $h_\epsilon(N;\alpha)$ for different values of $\alpha$ around the smaller critical value are given in Figures \ref{fig:Heps02} and \ref{fig:Heps06} for $\epsilon=0.2$ and $\epsilon=0.6$, respectively.

\begin{figure}[!ht]
\centering
\includegraphics[width=\textwidth]{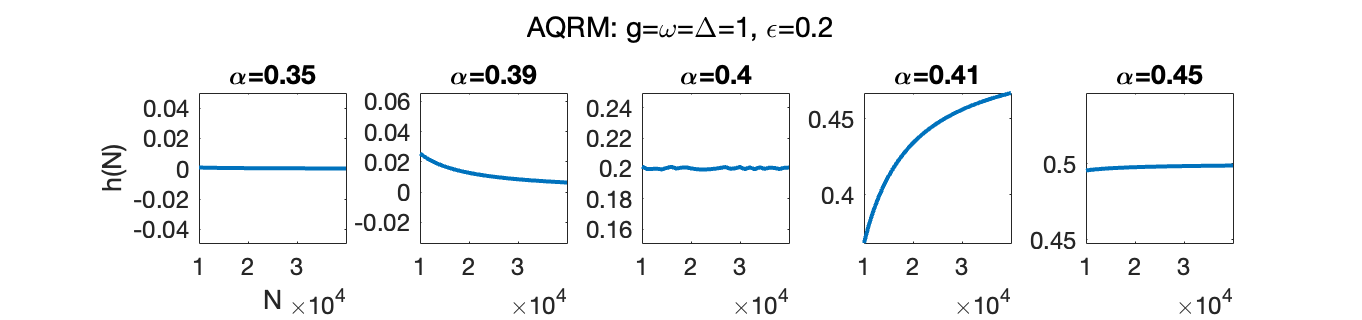}
\caption{Numerical value of $h_\epsilon(N;\alpha)$ for  $\epsilon=0.2$ around the critical value $\alpha_1=0.4$ for $N \leq 40,000$.}
\label{fig:Heps02}
\end{figure}

\begin{figure}[!ht]
\centering
\includegraphics[width=\textwidth]
{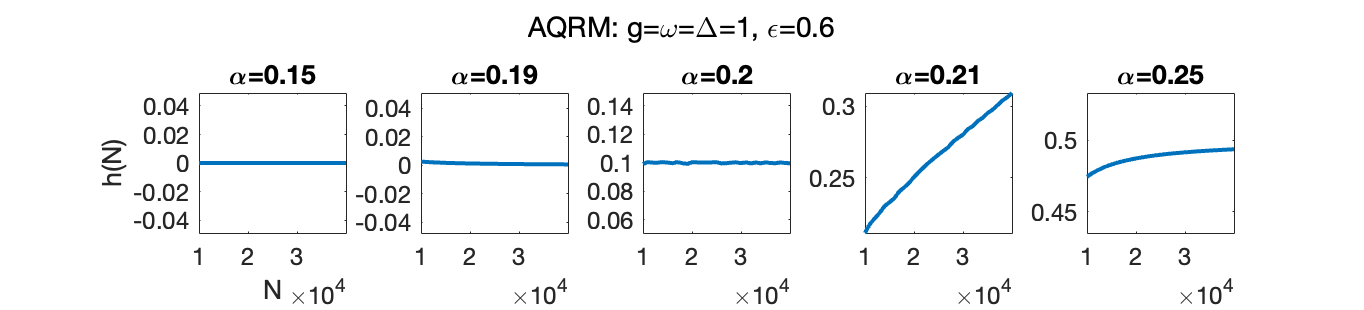}
\caption{Numerical value of $h_\epsilon(N;\alpha)$ for $\epsilon=0.6$ around the critical value $\alpha=0.2$ for $N \leq 40,000$.}
\label{fig:Heps06}
\end{figure}

Next, let us describe the behavior of $h_\epsilon(N;\alpha)$ with respect to $N$ about $\e (\notin \frac12 \Z)$ around the first critical value $\alpha_1$ from Figures \ref{fig:Heps02}  and \ref{fig:Heps06}. We only consider the first critical value $\alpha_1$ since a similar behavior is observed around the second critical value $\alpha_2$.

We observe that
\begin{itemize}
    \item for $\alpha < \alpha_1  $,  $h_\epsilon(N;\alpha)$ decreases to $0$ rapidly (with speed depending on the distance $\alpha_1- \alpha$) with respect to $N$,
    \item at the critical point $\alpha=\alpha_1$, the function $h_\epsilon(N;\alpha)$ oscillates around a constant value $c_1(\e)$,
    \item for $\alpha > \alpha_1$ the function $h_\epsilon(N;\alpha)$ increases with $N$ to the value $\frac12$. 
\end{itemize}
Interestingly, we note that in general the rate of increase for the $\alpha > \alpha_1$ case is not as pronounced as that of the decrease of the $\alpha < \alpha_1$ case.

The numerical results suggest that the limiting case of the cumulative distribution is given by
\[
    \lim_{N\to \infty} h_\epsilon(N;\alpha)= 
    \begin{cases}
      0   & \text{ for } \alpha < \alpha_1 \\
      c_1(\e)   & \text{ for } \alpha = \alpha_1 \\
      \frac12  & \text{ for }  \alpha_1  < \alpha < \alpha_2 \\
      c_2(\e)   & \text{ for } \alpha = \alpha_2 \\
      1   & \text{ for } \alpha_2  < \alpha
    \end{cases},
  \]    
for some constants $(0 \leq) c_1(\e), c_2(\e) (\leq 1)$. An obvious modification is needed for the case $\epsilon \in \frac14 + \frac12 \Z$ with a unique critical value.

Based on this numerical observations and recent asymptotic results, in the following section we describe the density distribution of energy spacing of the general case of the AQRM. We conclude the section by noting that while the actual values of $c_i(\e)$ are not relevant for the distribution measure of energy spacing, for all computed parameters the values of $c_i(\e)$ are given by
\[
    c_1(\e) = \frac12 \alpha_1, \qquad c_2(\e) = 1 - \frac12 \alpha_1,
\]
and by $c_1(\e) = \frac12$ for $\e \in \frac14 + \frac12 \Z$.

\subsection{Periodicity and symmetry of energy spacing with respect to the bias parameter}
\label{sec:refinement}

Based on the numerical observations of the previous sections and the recent result on the asymptotics of the eigenvalues for the AQRM \cite{cfz2023},  we describe the distribution for the spacing of  the AQRM in general.

\begin{thm}
 \label{conj:measure}
  For any $\e \in \R$, there is a measure $\mu_\e$ supported in $[0,\infty)$ such that
  \[
    \lim_{N\to \infty}\frac{M^{(\e)}_N(\alpha)}{M^{(\e)}_N(\alpha_0^{(\e)})}= \int_0^\alpha \mu_\e(x) dx.
  \]
  where $a_0^{(\e)}= {\rm sup} \left(S(g,\Delta,\e)\right)< \infty$.
The measure is given by
  \begin{equation}
    \label{eq:measure}
    \mu_\e(x) =  \frac12 \delta(x- \{2\epsilon\}) + \frac12 \delta(x -(1- \{2\epsilon\})).
  \end{equation}
\end{thm}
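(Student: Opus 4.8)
The plan is to establish Theorem \ref{conj:measure} by reducing the asymptotic spacing behavior of the AQRM to the known asymptotics of the eigenvalues from \cite{cfz2023}, in direct analogy with the proof of Proposition \ref{prop:measureQRM}. The starting point should be an asymptotic expression for $\lambda_n(\AHRabi{\e})$ of the same flavor as Theorem \ref{thm:asymptotic}, but now reflecting the two families of baselines $n + \e$ and $n - \e$ (equivalently, after accounting for the shift by $g^2$, the limiting spectrum $n \pm \e$ from \eqref{eq:limitspec}). The key heuristic, already visible in Figure \ref{fig:Eigencurves1} and in the $g\to\infty$ limit, is that for large $n$ the full spectrum interleaves two shifted copies of an integer lattice, one displaced by $+\{2\e\}$-type offset and one by its complement. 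Once the eigenvalues are known to cluster asymptotically around these two interleaved arithmetic progressions, the consecutive differences $\lambda_{n+1} - \lambda_n$ must asymptotically take only the two values $\{2\e\}$ and $1 - \{2\e\}$.

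The main steps I would carry out are as follows. First, using the asymptotics of \cite{cfz2023}, I would show that for any $\eta > 0$ there exists $N_0$ such that for all $n > N_0$ the spacing $\lambda_{n+1} - \lambda_n$ lies in $(\{2\e\} - \eta,\, \{2\e\} + \eta) \cup (1-\{2\e\} - \eta,\, 1-\{2\e\} + \eta)$. This localizes the support of the limiting measure to the two points $\{2\e\}$ and $1-\{2\e\}$ and immediately forces $\mu_\e = c_1\,\delta(x - \{2\e\}) + c_2\,\delta(x - (1-\{2\e\}))$ with $c_1 + c_2 = 1$, exactly as in Proposition \ref{prop:measureQRM}. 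Second, to pin down $c_1 = c_2 = \tfrac12$, I would establish that the two spacing types strictly alternate for large $n$: if $\lambda_{n+1} - \lambda_n$ is close to $\{2\e\}$, then $\lambda_{n+2} - \lambda_{n+1}$ is close to $1 - \{2\e\}$, and vice versa. This alternation follows from the interleaving structure of the two progressions — consecutive gaps within a single lattice sum to $1$, so the two gap types partition each unit interval and must occur with equal frequency. This is precisely the alternation argument used for the QRM, where the two types were $0$ and $1$; here the splitting into $\{2\e\}$ and $1-\{2\e\}$ is the natural deformation under nonzero bias, recovering the QRM case as $\e \to 0$.

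The finiteness of $\alpha_0^{(\e)} = \sup S(g,\Delta,\e)$, needed for the normalization, follows from the same boundedness reasoning as in Proposition \ref{prop:measureParity}: since consecutive eigenvalues cluster near two interleaved progressions with bounded gaps, no spacing can be arbitrarily large, and only finitely many small-$n$ spacings can exceed the asymptotic values; this gives $\alpha_0^{(\e)} < \infty$ with the numerical expectation $\alpha_0^{(\e)} \le 1$.

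The hard part will be step two, the equidistribution $c_1 = c_2 = \tfrac12$. In the QRM case the alternation was clean because one gap type was exactly $0$ (a genuine degeneracy forced by parity crossing) and the partition into parities gave a rigid bookkeeping via Theorem \ref{thm:asymptotic}. For $\e \notin \tfrac12\Z$ there is no parity label to track, the spectrum is generically multiplicity-free, and both gap types are strictly positive, so the alternation must be extracted purely from the quantitative asymptotics of \cite{cfz2023} rather than from a symmetry argument. The delicacy is that the oscillatory and error terms in the eigenvalue asymptotics could in principle disturb the strict alternation for an asymptotically negligible but not obviously controllable set of indices $n$; one must verify that the error term is genuinely of lower order than the gap $\min(\{2\e\}, 1-\{2\e\})$ uniformly, so that the assignment of each eigenvalue to one of the two progressions is unambiguous for all large $n$. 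Establishing this uniform separation — and checking that it degenerates gracefully into a single critical value $\tfrac12$ precisely when $\e \in \tfrac14 + \tfrac12\Z$ so that $\{2\e\} = 1 - \{2\e\}$ — is where the real work lies.
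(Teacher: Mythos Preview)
Your proposal is correct and takes essentially the same approach as the paper: invoke the asymptotic estimate of \cite{cfz2023} to localize the spacings near the two values $\{2\e\}$ and $1-\{2\e\}$, then use the interleaving of the two arithmetic progressions to obtain equal weights. The paper's proof is in fact much terser than yours---it simply cites the estimate $\lambda_n^{\pm} = n - g^2 \pm \tfrac12\e + O(n^{-1/4})$ and refers back to the argument of Proposition~\ref{prop:measureParity}---and notably points out that this asymptotic carries \emph{no oscillatory term}, so the difficulty you flag in step two (that oscillation might blur the assignment of eigenvalues to progressions) does not arise: for large $n$ the error $O(n^{-1/4})$ is eventually smaller than $\min(\{2\e\},1-\{2\e\})$ whenever $\e\notin\tfrac14+\tfrac12\Z$, making the alternation automatic.
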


\begin{proof}
    The proof follows in a similar way to Proposition \ref{prop:measureParity} by using the asymptotic estimate
    \[
        \lambda_n^{\pm} = n - g^2 \pm \frac12\e + O(n^{-\frac14}),
    \]
    for the eigenvalues $\lambda_n^{\pm}$ of the AQRM proved by Charif, Fino and Zielinski in \cite{cfz2023}. We note that the asymptotic estimate does not have an oscillatory term, but this is not a limitation for the proof.
\end{proof}

Additionally, according to the numerical computations, we conjecture that \( \alpha_0^{(\e)} < 1\). That is, the energy spacing never exceeds $1$.


The following corollary proves the periodicity and symmetry for the
family of the AQRM  discussed in the previous section.

\begin{cor}
  \label{cor:sym}
  By Theorem \ref{conj:measure}, for $n \in \Z$ we have
  \[
    \mu_{\e}(x) = \mu_{\e+ \frac{n}2}(x), \qquad (\tfrac12-\text{periodicity})
  \]
  and
  \[
    \mu_{\frac{n}4 + \e}(x) = \mu_{\frac{n}4 - \e}(x), \qquad  (\tfrac14-\text{symmetry}).
  \]
\end{cor}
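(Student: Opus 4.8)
The plan is to deduce both invariances directly from the explicit formula \eqref{eq:measure} of Theorem \ref{conj:measure}, so that the entire argument reduces to elementary manipulations of the fractional part. The structural observation I would start from is that $\mu_\e$ depends on $\e$ only through the single quantity $s(\e) := \{2\e\}$, and that the measure
\[
  \mu_\e(x) = \tfrac12\,\delta\bigl(x-s(\e)\bigr) + \tfrac12\,\delta\bigl(x-(1-s(\e))\bigr)
\]
is invariant under the reflection $s \mapsto 1-s$ of its parameter: this reflection merely interchanges the two atoms $s(\e)$ and $1-s(\e)$, which carry the same weight $\tfrac12$. Hence $\mu_\e$ is genuinely a function of the unordered pair $\{s(\e),\,1-s(\e)\}$, equivalently of the class of $2\e$ in $\R/\Z$ taken modulo the involution $t \mapsto -t$. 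Both claims then amount to tracking the effect of $\e \mapsto \e + \tfrac n2$ and $\e \mapsto \tfrac n4 - \e$ on this class.

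For the $\tfrac12$-periodicity I would simply compute $s(\e + \tfrac n2) = \{2\e + n\} = \{2\e\} = s(\e)$ for every $n \in \Z$, using that $\{\,\cdot\,\}$ is $\Z$-periodic and $2\cdot\tfrac n2 = n \in \Z$. Substituting into \eqref{eq:measure} gives $\mu_{\e+\frac n2} = \mu_\e$ with no case distinction.

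For the $\tfrac14$-symmetry the key identity is
\[
  2\bigl(\tfrac n4 - \e\bigr) + 2\bigl(\tfrac n4 + \e\bigr) = n \in \Z,
\]
so in $\R/\Z$ the arguments $2(\tfrac n4 - \e)$ and $2(\tfrac n4 + \e)$ are negatives of one another. Since two reals whose sum is an integer have fractional parts summing to $1$ unless both are integers, I get $s(\tfrac n4 - \e) = 1 - s(\tfrac n4 + \e)$ whenever $2(\tfrac n4 + \e) \notin \Z$; combined with the reflection invariance above this yields $\mu_{\frac n4-\e} = \mu_{\frac n4+\e}$. The only point demanding care --- and the single ``obstacle,'' such as it is --- is the degenerate case $2(\tfrac n4+\e) \in \Z$, i.e.\ $\e \in \tfrac14\Z$, where the fractional-part identity fails (the parts sum to $0$, not $1$). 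There I would instead note directly that $2(\tfrac n4 - \e) = n - 2(\tfrac n4 + \e) \in \Z$ as well, so $s(\tfrac n4 - \e) = s(\tfrac n4 + \e) = 0$ and both measures equal $\tfrac12\delta(x) + \tfrac12\delta(x-1)$, whence they coincide trivially. Apart from this boundary bookkeeping the corollary is a one-line substitution into \eqref{eq:measure}.
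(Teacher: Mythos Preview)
Your proof is correct and follows essentially the same route as the paper's: both arguments reduce to the explicit formula \eqref{eq:measure}, the $\Z$-periodicity of $\{\,\cdot\,\}$, and the elementary identity $\{-2\e\}=1-\{2\e\}$ combined with the symmetry of the two equally weighted atoms under $s\mapsto 1-s$. Your treatment is simply more explicit---in particular you spell out the boundary case $2(\tfrac n4+\e)\in\Z$ that the paper leaves implicit---but the underlying idea is identical.
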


\begin{proof}
The proof follows directly from \eqref{eq:measure} by using the elementary identity
  \[
    \{- 2\e\} = 1 - \{ 2\e\}.
  \]
\end{proof}

Following the discussion of Section \ref{sec:preliminaries}, for the case $\e = \frac{\ell}2 \in \frac12 \Z$ we define
\begin{equation}
  \label{eq:density3}
  d_{\ell}(N;g,\Delta) := \frac{ \# \{n\leq N \, | \, \lambda_n,\lambda_{n-1} \text{ are both of the same parity}\}} {N}
\end{equation}
and
\begin{equation}
  \label{eq:density4}
   r_\ell(N;g,\Delta) :=  \frac{ \# \{n\leq N \, | \, \lambda_n,\lambda_{n-1} \text{ have negative parity}\}}{\# \{n\leq N \, | \, \lambda_n,\lambda_{n-1}  \text{ have positive parity}\}}.
\end{equation}
Note that this is well defined with the possible exception of a finite number of eigenvalues. We expect the density of the parities defined in this way to follow the same pattern as in the QRM.

\begin{conj}
  For any $\ell \in \Z$, we have
\[
  \lim_{N\to \infty} d_\ell(N;g,\Delta) = \frac12, \qquad \lim_{N\to \infty} r_\ell(N;g,\Delta) = 1.
\]
\end{conj}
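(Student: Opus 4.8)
The plan is to reduce the half-integer case to the combinatorial mechanism already used for the symmetric QRM in Proposition~\ref{prop:measureQRM} and Corollary~\ref{cor:Alternating}, and then to supply the single analytic ingredient that is presently missing. First, by Proposition~\ref{prop:parity}, for $\e=\tfrac{\ell}2$ the parity (the sign of the eigenvalue of $J_\ell$) is well defined for all but finitely many eigenvalues; discarding this finite exceptional set affects neither limit, so the full spectrum splits cleanly into two parity ladders $\{\lambda_m^{+}\}_m$ and $\{\lambda_m^{-}\}_m$, and by the same proposition every degeneracy occurs between them. Consequently $d_\ell(N;g,\Delta)$ and $r_\ell(N;g,\Delta)$ are governed entirely by the way these two ladders interleave.

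The decisive step is to upgrade the Charif--Fino--Zielinski asymptotics \cite{cfz2023} to a parity-resolved refinement analogous to Theorem~\ref{thm:asymptotic}: writing $\lambda_m^{\pm}$ for the $m$-th eigenvalue of positive/negative parity,
\[
  \lambda_m^{\pm} = m - g^2 + c(\ell) \pm (-1)^m\,\kappa\,\frac{\cos\bigl(4g\sqrt{m}+\phi\bigr)}{\sqrt{2\pi g}}\,m^{-1/4} + O\bigl(m^{-1/2+\delta}\bigr),
\]
for constants $c(\ell),\kappa,\phi$ depending on $g,\Delta,\ell$, with the oscillation carrying opposite signs on the two parities. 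The leading behaviour $\lambda_m^{\pm} = m-g^2+c(\ell)+O(m^{-1/4})$ follows from \cite{cfz2023}; in particular each parity ladder already has spacing tending to $1$, so no two eigenvalues of equal parity can cluster and the two members of each near-degenerate pair necessarily carry opposite parity. What is needed in addition is the exact $m^{-1/4}$ oscillatory coefficient and, crucially, its opposite sign across the two parities.

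Granting this refinement, the remainder proceeds exactly as in the proof of Proposition~\ref{prop:measureQRM}. The spectrum organizes into clusters of two near each lattice point $m-g^2+c(\ell)$, with one eigenvalue of each parity placed symmetrically at distance $O(m^{-1/4})$, so consecutive spacings alternate between an intra-cluster spacing (mixed type, tending to $0$) and an inter-cluster spacing (tending to $1$). Writing $s_m:=\sgn\bigl((-1)^m\cos(4g\sqrt m+\phi)\bigr)$ for the order of the two parities inside cluster $m$, one checks that an inter-cluster pair is of positive type when $s_m$ jumps from $-$ to $+$, of negative type when it jumps from $+$ to $-$, and of mixed type when $s_m$ is constant. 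Since $\sqrt{m+1}-\sqrt{m}\to 0$, the slowly varying factor $\cos(4g\sqrt m+\phi)$ keeps a fixed sign on consecutive indices for a set of $m$ of density one, and the factor $(-1)^m$ then forces $s_m$ to reverse sign on a set of steps of density one; hence almost every inter-cluster pair is same-parity, giving $d_\ell\to\tfrac12$, while consecutive sign reversals necessarily alternate in direction, giving $r_\ell\to 1$. The only indices excluded lie near the sparse sign changes of $\cos(4g\sqrt m+\phi)$, whose counting function is $O(\sqrt N)=o(N)$ by Weyl equidistribution of $4g\sqrt m$ modulo $2\pi$.

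The main obstacle is the refinement itself. The existing error term $O(m^{-1/4})$ is of the same order as the intra-cluster separation we must resolve, so the current asymptotics cannot decide the ordering $s_m$ inside a cluster---and it is precisely this ordering that controls $r_\ell$. Extending the Boutet de Monvel--Zielinski analysis to extract the subleading oscillation for half-integer bias, and in particular verifying its opposite-sign behaviour across the two parity ladders, is therefore the crux. A viable alternative, paralleling Rudnick's treatment \cite{R2023} of the symmetric case, is to bypass the oscillation and instead prove a density-one version of the $G$-conjecture \cite{Braak2019} for $\e=\tfrac{\ell}2$, from which the interleaving of the two ladders, and hence both limits, would follow.
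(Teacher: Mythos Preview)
The statement you are attempting to prove is labelled a \emph{Conjecture} in the paper and is left open there; the paper offers no proof, only numerical evidence and the remark that the case $\ell=0$ was settled by Rudnick \cite{R2023} via a density-one version of the $G$-conjecture. So there is no ``paper's own proof'' to compare against, and you should not present your write-up as a proof.

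That said, your proposal is an honest and well-structured proof \emph{strategy}, and you correctly isolate the genuine obstruction. The reduction via Proposition~\ref{prop:parity} to two parity ladders is sound, and the combinatorial endgame you describe (clusters of two, alternation of $s_m$ forced by $(-1)^m$ on a density-one set, sparse exceptions near zeros of the cosine) would indeed mirror the mechanism behind Proposition~\ref{prop:measureQRM} and Rudnick's argument. The problem is exactly the one you flag: the Charif--Fino--Zielinski asymptotics for the AQRM give only $\lambda_n^\pm = n - g^2 \pm \tfrac{\e}{2} + O(n^{-1/4})$ with no oscillatory subleading term, and for $\e=\tfrac{\ell}{2}\in\tfrac12\Z$ the two ladders have the \emph{same} leading behaviour, so the $O(n^{-1/4})$ error swallows precisely the information needed to determine the intra-cluster ordering $s_m$. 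Your posited refinement---a Boutet de Monvel--Zielinski type oscillation with opposite signs on the two $J_\ell$-parities---is plausible by analogy with $\ell=0$, but it is not in the literature, and establishing it would require redoing the entire large-eigenvalue analysis in a setting where the parity projections are not explicit (the operators $J_\ell$ depend on $g,\Delta$ and are not diagonal in any simple basis). This is a substantial open problem, not a routine extension.

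Your alternative route, extending Rudnick's density-one $G$-conjecture to half-integer bias, is probably the more realistic line of attack, since it avoids resolving the oscillation and works instead with the pole structure of the $G$-function, which is understood for general $\e$. But that too is not carried out here. In short: your proposal is a correct diagnosis of what a proof would require, not a proof, and the paper itself does not claim otherwise.
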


To conclude this section, we note that by Theorem \ref{conj:measure}, for the case $\e=\frac14$ the distribution is 
given by
\[
  \mu_{\tfrac{n}4}(x) = \delta(x-\tfrac12),
\]
resulting in a spacing distribution similar to the case of a single parity of the QRM, which may be related to the
symmetry in Corollary \ref{cor:sym}.

\subsection{Remarks on the G-conjecture}
\label{sec:remarksG}

The $G$-conjecture is a conjecture on the eigenvalue distribution of the QRM (see \cite{Braak2019}) that remains open despite being supported by extensive numerical studies and asymptotic methods (see e.g. \cite{Sugi2016}).  One of the difficulties is that the conjecture describes very precisely the position of all the eigenvalues (or equivalently, the roots of the $G$ function) for a given parity and not only the large energy behavior. We remark that no counterexample to the conjecture has surfaced in our numerical experiments. We note that in the conjecture the eigenvalues are considered normalized, that is, translated by $g^2$ (i.e. $\lambda + g^2$).

\begin{conj} 
  For a given parity, the number of eigenvalues in an interval $[n,n+1)$ is either $0$, $1$, or $2$.
  Moreover, an interval with two eigenvalues cannot be adjacent to another interval with two eigenvalues.
  Similarly, an interval with no eigenvalues cannot be adjacent to another interval with no eigenvalues.
\end{conj}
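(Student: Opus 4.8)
The plan is to reduce all three assertions to a single quantitative separation bound on the eigenvalues and then to establish that bound regime by regime. Fix a parity and write $\nu_n := \lambda_n(H_\pm) + g^2$ for the normalized eigenvalues, so that $\nu_1 < \nu_2 < \cdots$ (these are simple for a single parity). I claim the conjecture for that parity follows from the statement that, under the labeling in which $\nu_n \approx n$, one has $|\nu_n - n| < 1$ for every $n$. Granting this, each $\nu_n$ lies in exactly one of the two flanking unit intervals, so one may define a binary landing variable $s_n \in \{0,1\}$ by $s_n = 1$ if $\nu_n \in [n,n+1)$ and $s_n = 0$ if $\nu_n \in [n-1,n)$. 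A direct count then shows that the number of eigenvalues in $[m,m+1)$ is
\[
  \#\{\, n : \nu_n \in [m,m+1) \,\} = 1 + s_m - s_{m+1} \in \{0,1,2\},
\]
which is the first assertion. Two eigenvalues occur exactly when $(s_m,s_{m+1}) = (1,0)$ and none exactly when $(s_m,s_{m+1}) = (0,1)$; since the value $s_{m+1}$ enters the conditions for $[m,m+1)$ and for $[m+1,m+2)$ with opposite roles, a doubly occupied gap cannot be followed by a doubly occupied gap, and an empty gap cannot be followed by an empty gap. This yields both adjacency statements, so the entire combinatorial content reduces to the separation bound $|\nu_n - n| < 1$.

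Second, I would settle the two accessible regimes. For $\Delta < 1$ the bound is immediate from the displaced-oscillator form $H_\pm + g^2 = (a^\dagger + g)(a + g) \pm \Delta \hat T$: the operator $B := (a^\dagger + g)(a + g)$ is a number operator with spectrum $\{0,1,2,\dots\}$, while $\pm\Delta\hat T$ is bounded and self-adjoint with norm $\Delta$ (as $\hat T^2 = \id$). The min-max principle then gives $|\nu_n - \beta_n| \le \Delta < 1$, where $\beta_n$ is the $n$-th eigenvalue of $B$, which is precisely the separation bound (the integer labeling being $n \mapsto \beta_n$). For arbitrary $\Delta$ the asymptotic expansion \eqref{eq:asymptotics} gives $\nu_n = n + O(n^{-1/4})$, so $|\nu_n - n| < 1$ holds for all $n \ge N_0(g,\Delta)$ and the conjecture is secured for all but finitely many gaps. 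The two arguments use labelings that differ by a fixed index shift, which is immaterial since the reduction above is invariant under such a shift.

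The hard part, and the genuine obstacle, is the remaining low-energy window $\Delta \ge 1$: there the perturbation estimate degrades to $|\nu_n - \beta_n| \le \Delta$, which is vacuous once $\Delta \ge 1$, while the asymptotics of \eqref{eq:asymptotics} have not yet set in. This is precisely the regime in which the excited-state quantum phase transition discussed in Section \ref{sec:furtherobservations} reorganizes the bottom of the spectrum, so no soft argument can be expected. Here I would discard perturbation theory and work directly with the (parity-resolved) $G$-function $G^{(0)}(x;g,\Delta)$, whose zeros are the $\nu_n$ and whose poles sit exactly at the integers (the baselines of Figure~\ref{fig:Eigencurves1}); in these terms the conjecture asserts that $G^{(0)}$ has at most two zeros in each pole-to-pole interval $(n,n+1)$, with the prescribed non-adjacency of doubly occupied and empty intervals. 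The strategy would be to extract from the confluent-Heun / three-term-recurrence representation of $G^{(0)}$ a Sturm-type oscillation or interlacing estimate controlling the sign changes between consecutive poles, combined with an analysis of the residues at the poles and of the ``lifted'' Juddian points to fix the parity of the zero count in each gap. The principal difficulty I anticipate is \emph{uniformity}: the oscillation estimate must hold for all $g,\Delta > 0$ down to the very first gaps, and the presently available analytic control of $G^{(0)}$ near its lowest poles does not suffice to exclude a third zero, or an anomalous empty-empty or double-double adjacency, at small energies. Removing that gap is exactly what keeps the conjecture open.
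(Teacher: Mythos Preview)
The statement you are attempting to prove is presented in the paper as an \emph{open conjecture} (the $G$-conjecture); the paper offers no proof, only the remark that it ``remains open despite being supported by extensive numerical studies and asymptotic methods'' and a pointer to Rudnick's density-one result \cite{R2023}. So there is no paper proof to compare against, and your proposal should be read as a proof \emph{strategy} together with partial results --- which is exactly how you have framed it.

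Your combinatorial reduction is correct and tidy: the bound $|\nu_n-n|<1$ (for some consistent integer labelling) does imply all three clauses via the landing variables $s_n$, and the count $1+s_m-s_{m+1}$ is exactly right. The $\Delta<1$ regime is handled cleanly by the displaced-oscillator decomposition and min--max, and the large-$n$ regime follows from \eqref{eq:asymptotics} as you say. One small remark: the separation bound is strictly \emph{stronger} than the conjecture (one can write down putative sequences satisfying the adjacency rules while violating $|\nu_n-n|<1$), so even a complete proof along your lines would establish more than is asked --- which is fine, but worth being aware of if the stronger statement turns out to be harder than the original.

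Your identification of the genuine obstacle --- the low-energy window for $\Delta\ge 1$, where neither the perturbative bound nor the asymptotics are available --- is accurate and matches the state of the art. The $G$-function/Sturm-oscillation approach you sketch is reasonable but, as you acknowledge, currently lacks the uniform control near the lowest poles needed to close the argument. In short: nothing in your write-up is wrong, and you have correctly isolated why the conjecture is still open; what remains is precisely the hard analytic work that nobody has yet done.
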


Using the results of this paper and the numerical observation, we can make some remarks on the conjecture.
First, if $\alpha_0$ is bounded above by $1$ as we expect, then for full spectrum there should be
no intervals of the form $[n,n+1)$ that contain no eigenvalues.

Note that according to the $G$-conjecture there may be intervals with no eigenvalues  of a given parity. For the full spectrum, whenever such empty interval appears for one parity, an eigenvalue of the opposite parity must be present as in \eqref{eq:mixedcrossing}.

In fact, according to the Weyl law for the QRM (obtained using spectral zeta methods \cite{Sugi2016}) the number of
eigenvalues for the QRM smaller than $T>0$ should be $2T$ as $T \to \infty$, we expect that actually, all intervals
$[n,n+1]$ contain $2$ eigenvalues.  
In  \cite{R2023}, this result was proved in density one (that is, except for a finite number of intervals), making it the current best result for the $G$-conjecture to date. In particular, this also shows that the number of intervals containing a single eigenvalue is finite, or equivalently, that, asymptotically, intervals with two eigenvalues always alternate between the two parities.


Other questions about the distribution of the eigenvalues remain open.  The relation between the system parameters and the number of intervals with a single eigenvalue may also be an interesting topic of study. In addition, the extension of the $G$-conjecture for the AQRM.

\section{Further numerical observations}
\label{sec:furtherobservations}

In this section, we fix the bias parameter $\e$ and consider aspects of the energy spacing of the AQRM different from the asymptotic analysis of Section \ref{sec:spacing}.
Concretely, we introduce an internal symmetry observed numerically in the eigenvalue spacing, and we further describe the behavior of the first $[2\e]$ (low-energy) states and the corresponding spectral curves. This is closely related to the remarkably good approximations of low-energy states observed in \cite{RW2022}, using a \emph{single} particular polynomial that appears in the description of hidden symmetries and the spectral degeneracy of the AQRM for $\e\in \frac12\Z$, which appears to be outside the scope of perturbation theory. Finally, in Section \ref{sec:curvefitting} using curve fitting methods we make observations on the order of the error term on the approximation of Theorem \ref{thm:asymptotic}.

\subsection{Internal symmetry in the energy spacing}
\label{sec:internal_symmetry}

The asymptotic symmetry described in Section \ref{sec:refinement} is based on the variation of the bias parameter $\e$. In other words, it may be considered a symmetry between different models (corresponding to different system parameters). In addition, the numerical analysis also shows the existence of a symmetry in the energy spacing that occurs for any fixed set of system parameters, which we may call an {\em internal symmetry}.

\begin{figure}[!ht]
\centering
\includegraphics[width=0.75\textwidth]{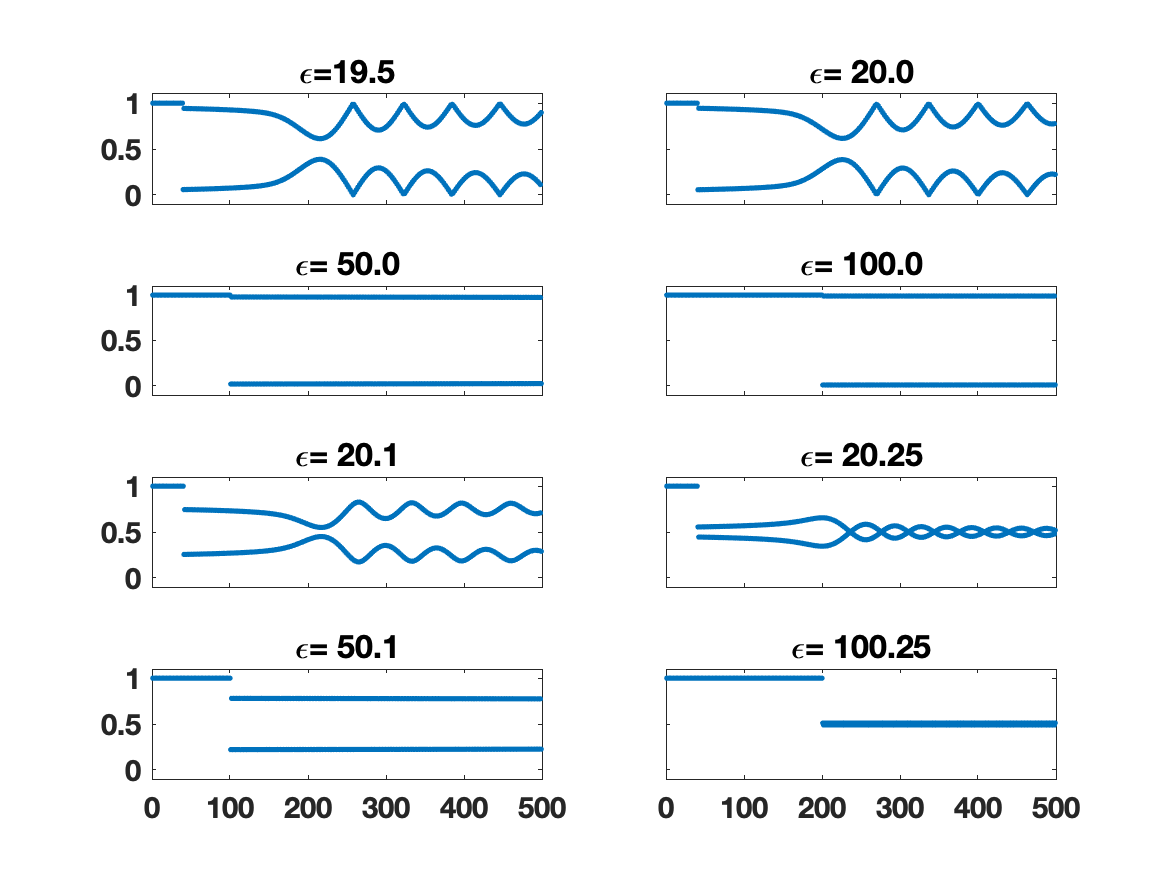}
\caption{Energy spacing of AQRM with $g=\Delta=1$ when $\epsilon$ is large. The apparent line segment of length $2[\e]$ in the upper left corner of each graph is related to the first $\ell$ spectral curves ($\e=\frac\ell2$, see the spectral graph shown in Figure \ref{fig:Eigencurves2} below). Note that the figures consist of many nearby points, not continuous lines.}
\label{fig:ES_AQRMs_large_epsilon}
\end{figure}

To illustrate this symmetry, in Figure \ref{fig:ES_AQRMs_large_epsilon} we show the graph of the first $500$ energy spacing points for different bias parameters $\e$.  First, we note that with the possible exception of a number of eigenvalues at the beginning (see Section \ref{sec:excellent approx} below), the energy spacing is contained in the intervals given by
\[
  \begin{cases}
    (\{\e\}, 1-\{\e\})  & \text{ if } 0 \leq \{\e\} < \frac14 \\
    (\frac12 - \{\e\}, \frac12 + \{\e\})  & \text{ if } \frac14 \leq \{\e\} < \frac12 \\
    (\{\e\}-\frac12, \frac32 - \{\e\})  & \text{ if } \frac12 \leq \{\e\} < \frac34 \\
    (1-\{\e\}, \{\e\})  & \text{ if } \frac34 \leq \{\e\} < 1
  \end{cases}.
\]
Interestingly, when comparing the shape of graphs in Figure \ref{fig:ES_AQRMs_large_epsilon}, we see that the energy spacing appears to be symmetric with respect to $\frac12$ for each $\e$. 
Concretely, for sufficiently large $n$ the numerical energy points satisfy
\[
    \left( \lambda_n - \lambda_{n-1}\right) +   \left( \lambda_{n+1} - \lambda_{n}\right) \approx 1.
\] 
In Figure~\ref{fig:internal_symmetry} we illustrate the internal symmetry by reflecting the energy spacing points smaller than $\frac12$.

\begin{figure}[!ht]
\centering
\includegraphics[width=.8\textwidth]{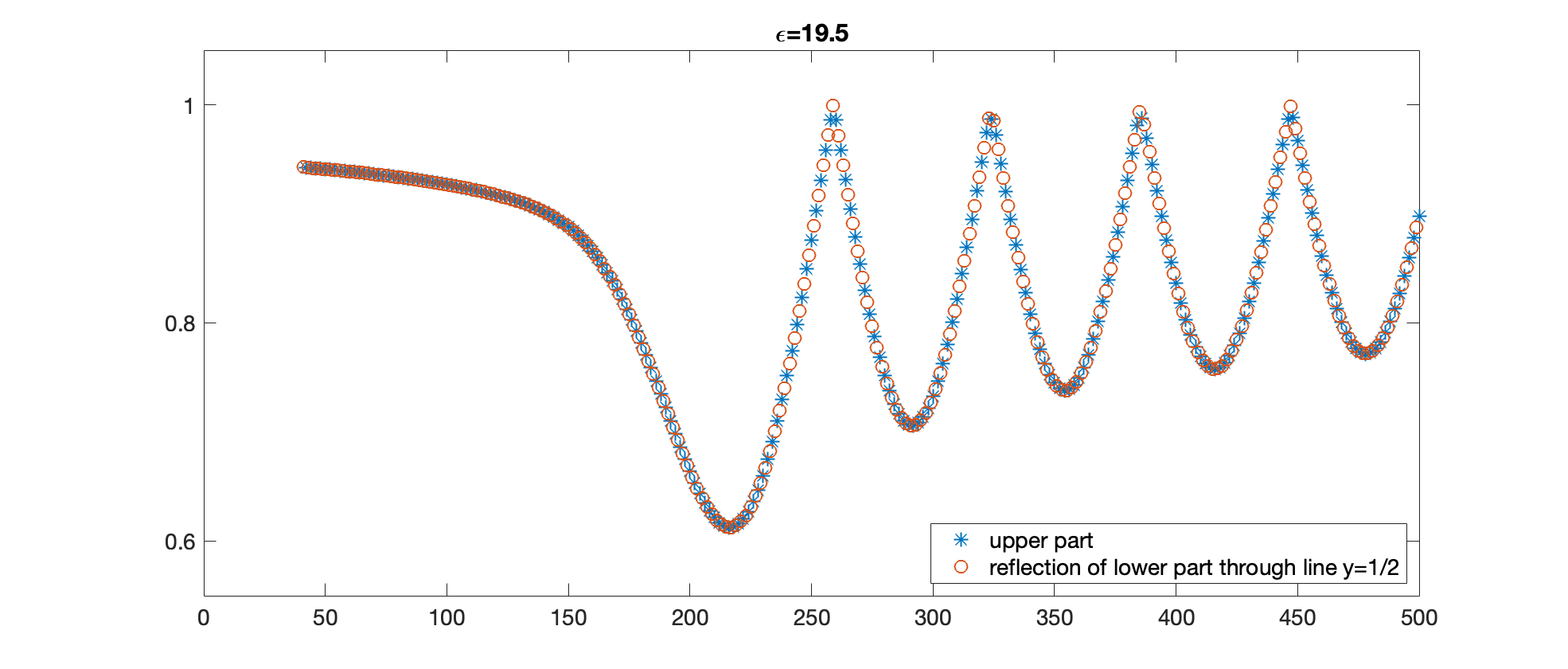}
\caption{Plot of spacing points where the spacing points smaller than $1/2$ are reflected.}
\label{fig:internal_symmetry}
\end{figure}

We note that the internal symmetry is consistent with the computational graphs in Figures \ref{fig:Distinguish_ES_on_Parities} and \ref{fig:truncated_density_vs_N}.  However, despite appearing in all of the numerical cases considered, this internal symmetry in the energy
spacing is difficult to explain in a theoretical way.

\begin{figure}[!ht]
\centering
\includegraphics[width=1\textwidth]{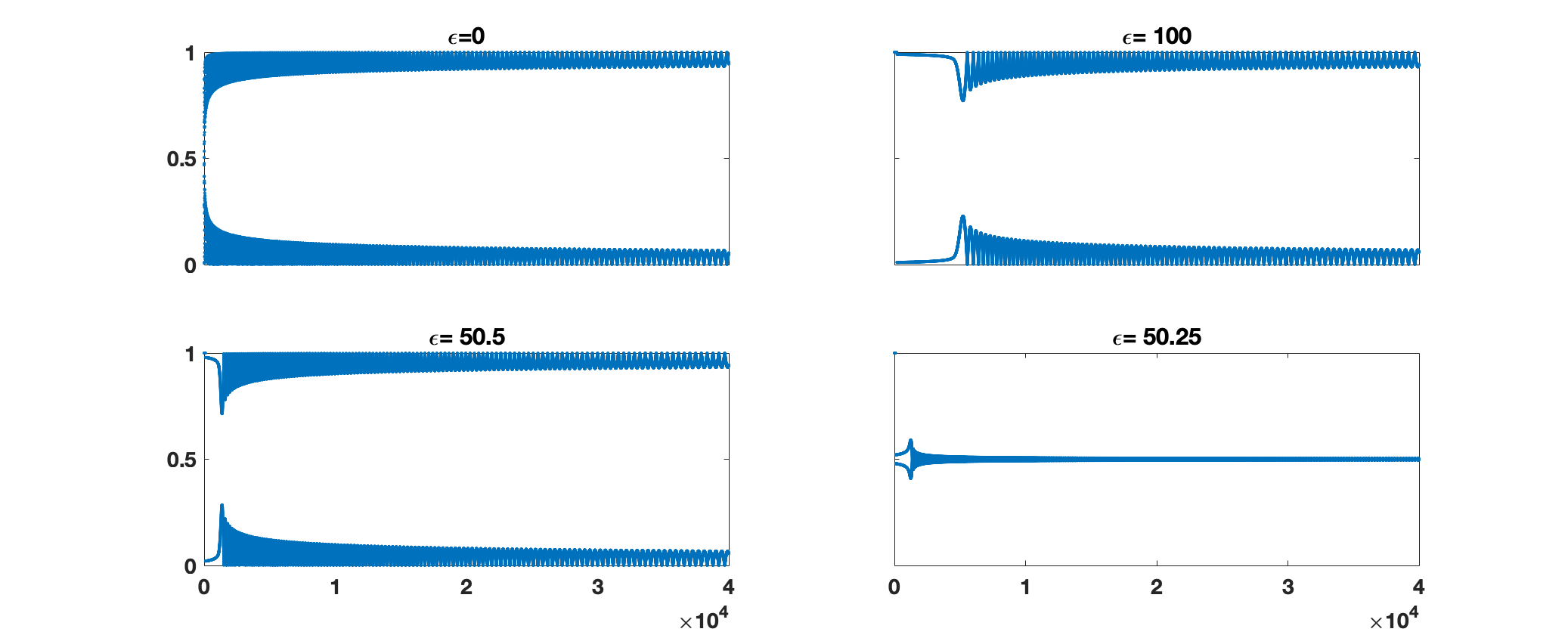}\\
\caption{Asymptotic behavior of energy spacing of QRM and AQRM with half-integer bias, $g=\Delta=1$.}
\label{fig:Internal_symmetry_1}
\end{figure}

\subsection{Analytic approximation of low-energy levels}
\label{sec:excellent approx}

In this subsection, we make additional observations from the degeneracy and hidden symmetry point of view of the behavior of the level spacing for large values of the bias parameter $\epsilon$ (shown in Figure~\ref{fig:ES_AQRMs_large_epsilon}), in particular regarding the energy spacing points that
do not appear to have the symmetric property discussed in Section \ref{sec:internal_symmetry}.

First, in Figures \ref{fig:ES_AQRMs_large_epsilon} and \ref{fig:Internal_symmetry_1}, we note that for $\epsilon=\frac{\ell}2 \in \frac12\Z_{\geq0}$, the first $\ell$ level spacing have values close to $1$. Numerically, these energy levels appear to be non-degenerate but a rigorous proof has not been given with the exception of the ground state, and, asymptotically for $g\rightarrow \infty$.

For half-integer $\e = \frac{\ell}2$, the spacing of the first $\ell$ energy levels follows from the conjectured excellent approximation of the lower energy levels by the zero locus of the polynomial $p_\ell(x;g,\Delta)$. Concretely, this is the conjecture that the points in the $(x, g)$-plane satisfying
\begin{equation}
  \label{eq:excellent}
  p_\ell(x;g,\Delta)=0
\end{equation}
correspond to the first $\ell$ energy eigenvalues of $\AHRabi{\ell/2}$ as shown in  Figure \ref{fig:Eigencurves2}.

\begin{figure}[h!]
  \begin{center}
    \includegraphics[height=4cm]{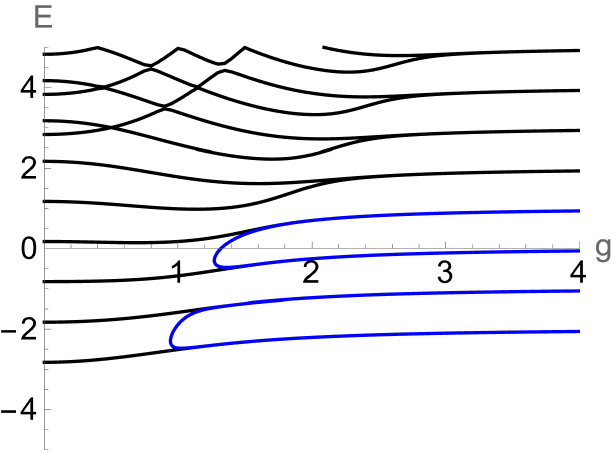}
    \quad
    \includegraphics[height=4cm]{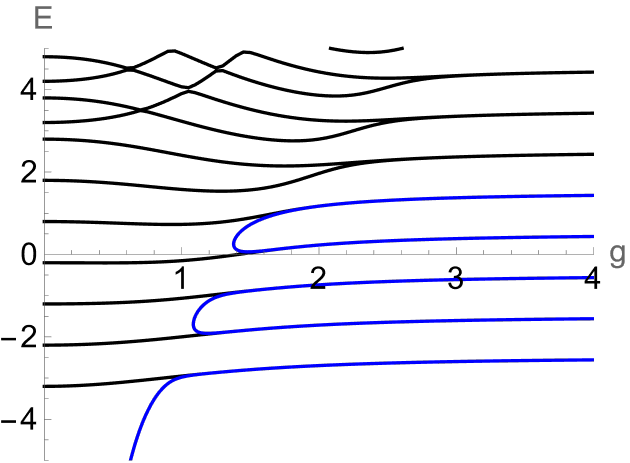}
  \end{center}
  \caption{Renormalized spectral curves (black) of the Hamiltonian $\AHRabi{\e}+g^2$ ($\e=\frac{\ell}2$), curves defined by $p_\ell(\text{E};g,\Delta)=0$ \eqref{eq:excellent} ($\text{E}$ being the eigenvalue of $\AHRabi{\e}+g^2$) and an excellent approximation by curves (blue) for $\Delta = 2$ for  $\ell=4$ (left) and $\ell=5$ (right). }
  \label{fig:Eigencurves2}
\end{figure}

This approximation is quite remarkable because it approximates, albeit implicitly, a whole set of eigenvalues in closed form for all sufficiently large coupling parameters. Physically, that the level spacing must be close to $1$ is a consequence of the fact that the low lying eigenstates live dominantly in the spin-down sector and the hybridization with the upper level via the coupling term is suppressed for large $\e$ -- a similar phenomenon is well-known in the QRM for large $\Delta$. The discontinuity in the level spacing seen in Figure~\ref{fig:ES_AQRMs_large_epsilon} is equivalent to a discontinuity in the density of states (DOS) and indicates an excited state quantum phase transition (see \cite{PHP2016} for the analogous case of the QRM with large $\Delta$).
In addition, in the weak limit $g\to \infty$ the spectrum is given by $\left\{ n \pm \frac{\ell}2  \,|\, n \geq 0\right\}$ (see \eqref{eq:limitspec}) which shows that the spacing for the first $\ell$ eigenvalues in this limit is exactly $1$. 

In fact, the numerical results above appear to be connected to the main conjecture in \cite{RW2022} which describes the connection between the degeneracy and hidden symmetry for the AQRM. Concretely, by the determinant expression of $A^\ell_N(u,v)$ given in \eqref{eq:Adetexp}, the main conjecture in \cite{RW2022} is equivalently formulated in the following form.

\begin{conj} 
  The polynomial $p_{\ell}(x;g,\Delta)$  has an expression as 
  \begin{equation} 
    \label{eq:polyPandA}
  p_{\ell}(x;g,\Delta) = \det \left( \Delta^2 \bm{I}_\ell + \bm{M}_{\ell}(x,g) \right).
  \end{equation}
  Here $\bm{M}_{\ell}(x,g)$ is the tri-diagonal matrix given by 
  \[
    \bm{M}_{\ell}(x,g) = \Tridiag{((2g)^2-\ell+2i-1)(x - \tfrac{\ell}2 + g^2 + i) }{(x - \tfrac{\ell}2 + g^2 + i)}{-i(\ell-i)(x - \tfrac{\ell}2 + g^2 + i+1)}{1\le i\le \ell}.
  \]
\end{conj}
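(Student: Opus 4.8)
The plan is to separate the purely linear-algebraic content of \eqref{eq:polyPandA} from the genuinely hard analytic input, and to reduce the statement to a finite family of polynomial identities in $(g,\Delta)$ that can in principle be checked by computing the symmetry operator explicitly. First I would observe that the determinant on the right of \eqref{eq:polyPandA} is, up to a harmless row rescaling, exactly the determinant expression \eqref{eq:Adetexp} of $A^\ell_N$ continued to a non-integer value of $N$. Concretely, writing $N := x - \tfrac{\ell}{2} + g^2$ and multiplying the $i$-th row of the matrix in \eqref{eq:Adetexp} by $(N+i)$, the diagonal entry $u + \tfrac{v}{N+i} - \ell + 2i - 1$ becomes $((2g)^2 - \ell + 2i - 1)(N+i) + \Delta^2$, the super-diagonal $1$ in row $i$ becomes $(N+i)$, and the sub-diagonal entry in row $i+1$ becomes $-i(\ell-i)(N+i+1)$ --- precisely the entries of $\Delta^2\bm{I}_\ell + \bm{M}_\ell(x,g)$. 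Since the product of the row multipliers is $\prod_{i=1}^{\ell}(N+i) = (N+\ell)!/N!$, this yields
\[
  \det\bigl(\Delta^2\bm{I}_\ell + \bm{M}_\ell(x,g)\bigr) = A^\ell_{\,x - \ell/2 + g^2}\bigl((2g)^2,\Delta^2\bigr),
\]
where the right-hand side denotes the polynomial in $(x,g,\Delta)$ obtained from \eqref{eq:Adetexp} by the substitution $N \mapsto x - \tfrac{\ell}{2} + g^2$ (the poles $v/(N+i)$ being cleared by the prefactor). Thus \eqref{eq:polyPandA} is equivalent to the polynomial identity $p_\ell(x;g,\Delta) = A^\ell_{\,x - \ell/2 + g^2}((2g)^2,\Delta^2)$.

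Next I would reduce this to finitely many numerical identities. Both sides are polynomials of degree $\ell$ in $x$ --- equivalently, polynomials of degree at most $\ell$ in the variable $N = x - \tfrac{\ell}{2} + g^2$ with coefficients in $\R[g^2,\Delta^2]$. By Lagrange interpolation it therefore suffices to prove, for $\ell+1$ distinct nonnegative integers $N$ (say $N = 0,1,\dots,\ell$), the identities in $(g,\Delta)$
\[
  p_\ell\bigl(N - g^2 + \tfrac{\ell}{2};\, g, \Delta\bigr) = A^\ell_N\bigl((2g)^2,\Delta^2\bigr).
\]
Matching at $\ell+1$ points forces equality of leading coefficients as well, so no separate top-degree computation is needed. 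The value $x = N - g^2 + \tfrac{\ell}{2}$ is exactly the Juddian energy attached to level $N$ in \eqref{eq:div}, which is what makes $A^\ell_N$ the natural candidate: on the degeneracy locus $\cp{N,\ell/2}{N}((2g)^2,\Delta^2) = 0$ this $x$ is a genuine doubly-degenerate eigenvalue, and there $p_\ell(x)$ equals the square of the eigenvalue of $J_\ell$ on the two-dimensional degenerate subspace by \eqref{eq:Jquad2}.

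The heart of the argument, and the main obstacle, is to establish these evaluations as identities in $(g,\Delta)$ rather than merely on the degeneracy curves. The spectral interpretation gives information only on the locus $\cp{N,\ell/2}{N} = 0$: matching $p_\ell$ with $A^\ell_N$ there would, by irreducibility of the constraint polynomial, yield only a congruence $p_\ell \equiv A^\ell_N \pmod{\cp{N,\ell/2}{N}}$, not the equality needed away from the degeneracy locus. I would therefore work with an explicit realization of the symmetry operator $J_\ell$ from \cite{RBBW2021} in the Fock (Bargmann) representation, where $\AHRabi{\ell/2}$ is the tridiagonal three-term recurrence of Section \ref{sec2:EigenvalueComputation} and $J_\ell$ intertwines the biases $+\tfrac{\ell}{2}$ and $-\tfrac{\ell}{2}$ by a shift of $\ell$ levels --- the same shift that underlies the factorization \eqref{eq:div}. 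Composing $J_\ell$ with its reverse intertwiner gives $J_\ell^2 = p_\ell(\AHRabi{\ell/2})$, and I expect the determinant of the associated $\ell$-step transfer matrix connecting level $N$ to level $N+\ell$ to be exactly the tridiagonal determinant built from $\bm{M}_\ell$.

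The genuine difficulty is that no closed form for $J_\ell$ is available for general $\ell$; only small cases have been computed, and the abstract defining relation \eqref{eq:Jquad2} pins down $p_\ell$ only through its values, which (as just noted) determine it merely up to the constraint polynomial on each degeneracy curve. Controlling the intertwiner uniformly in $\ell$ and evaluating the resulting transfer determinant in closed form --- equivalently, extracting the polynomial $p_\ell$ away from the degeneracy locus --- is where the real work lies, and it is precisely this step that keeps the statement a conjecture rather than a theorem.
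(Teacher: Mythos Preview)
The paper presents this statement as an open \emph{conjecture} (reformulated from \cite{RW2022}) and offers no proof, so there is nothing to compare your attempt against. You have correctly recognised this: your final paragraph explicitly names the missing ingredient --- a closed form for $J_\ell$ valid for all $\ell$, or at least enough control over it to pin down $p_\ell$ off the degeneracy locus --- and identifies it as the reason the statement remains conjectural.

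Your linear-algebraic reduction is correct and worth recording. The row-scaling argument showing
\[
  \det\bigl(\Delta^2\bm{I}_\ell + \bm{M}_\ell(x,g)\bigr) = A^\ell_{\,x-\ell/2+g^2}\bigl((2g)^2,\Delta^2\bigr)
\]
(with the right-hand side read as the polynomial obtained from \eqref{eq:Adetexp} after clearing the factors $N+i$) is exactly right, and it makes precise the paper's remark that the conjecture here is an equivalent reformulation of the main conjecture in \cite{RW2022}. The interpolation reduction to the finitely many evaluations $p_\ell(N-g^2+\tfrac{\ell}{2};g,\Delta)=A^\ell_N((2g)^2,\Delta^2)$ for $N=0,\dots,\ell$ is also sound, and your observation that the spectral relation \eqref{eq:Jquad2} only constrains $p_\ell$ modulo the constraint polynomial on each Juddian curve is the precise diagnosis of why the problem is hard. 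None of this is in the paper; it goes further than the paper does toward isolating what would actually need to be proved.
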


The excellent approximation given in Figure \ref{fig:Eigencurves2} is obtained by using the determinant expression in the right-hand side of \eqref{eq:polyPandA}. Since the conjecture gives the explicit formula of $p_{\ell}(x;g,\Delta)$, we see that the graph  $p_{\ell}(x;g,\Delta)=0$  in $(g, x)$-plane actually gives an excellent approximation for the curves exhibited in Figure \ref{fig:Eigencurves2}  together with the curve given by \eqref{eq:excellent}. We refer the reader to \cite{RW2022,RW2023} for more examples and further discussion. 

Moreover, since the existence of the line segment close to $1$ in the upper left of each graph for a half-integer case in Figures \ref{fig:ES_AQRMs_large_epsilon} and \ref{fig:Internal_symmetry_1}) are consistent with the excellent approximation for the first $\ell$ curves in Figure \ref{fig:Eigencurves2}, the two numerical results observed from Figures  \ref{fig:ES_AQRMs_large_epsilon} and \ref{fig:Eigencurves2} are considered more trustworthy for representing the situation.

In the general $\e$ case, we also observe a number of spacing points at the last four graphs in Figure \ref{fig:ES_AQRMs_large_epsilon} having spacing close to $1$ and the number appears to be $2[\e]$, where $[x]$ denotes the integer part of $x$. 

The persistence of this behavior suggests that there might be a way to interpolate the curves \eqref{eq:excellent} between different values of $\ell$ while keeping the excellent approximation shown in Figure \ref{fig:Eigencurves2}. Precisely, we may infer that graphs of the first $m=2[\e]$ spectral curves look similar that we observe in Figure \ref{fig:Eigencurves2} (loosely speaking, parallel curves with an interval width of $1$ approximately), while there are no level crossings (not like those graphs in Figure \ref{fig:Eigencurves2}) for $n$-th spectral curves for $n>m$. However, even if such curves (given by equations similar to  \eqref{eq:excellent}) exist for a non-integral $\e$, and since there is no symmetry or degeneracy of the AQRM in these cases, it must be given by an analytic (and transcendental) function, that is, by a power series expression (not a polynomial).

We remark that for small values of parameters $g$ and $\Delta$ the behavior of the spacing of the first energy points may be explained by using Rayleigh-Schr\"odinger perturbation theory (see e.g. \cite{SK2017}), but the excellent approximation property is stronger since it appears to hold for arbitrary values of the system parameters.

\begin{rem}
We recall that the constraint polynomials have also been used in the generalized adiabatic approximation \cite{LB2021b} to improve the approximation of the spectral curves (not limited to the low energy spectral curves ) given by the usual adiabatic approximation that relies on Laguerre polynomials. All things considered, it appears that the constraint polynomials for the exceptional spectrum have a deep relation to the full spectrum, including the regular part, which merits further studies.
\end{rem}

\subsection{Energy spacing distribution for fixed parity}
\label{sec:curvefitting}

In this subsection, we investigate the large-energy behavior of energy spacing for a single parity of the QRM via the curve fitting method to complement Theorem \ref{thm:asymptotic} of \cite{BZ2021} and give further insight into the speed of convergence.

We consider the QRM with $g=\Delta=1$ and positive parity. The statistics of the energy spacing is determined by the high-energy part where the spacing becomes regular. In this calculation, we consider spacing from $\lambda_{10,000}$ to $\lambda_{45,550}$ and pay attention only to values that are greater than or equal to $1$. When we connect the peaks (red points in Figure~\ref{high-energy-part}(a), we observe a ``smooth'' curve as in Figure~\ref{high-energy-part}(b). Therefore, it is most likely that we can describe this curve, the envelope of the spacing distribution, by some approximate analytical expression. We also calculate the period lengths by taking the distances between consecutive peaks. Apparently the period increases for increasing $\lambda_n$. More precisely, the width of the periods increases almost linearly with the index of the period, as seen in Figure~\ref{high-energy-part}(c)).

\begin{figure}[ht]
\centering
\includegraphics[width=11cm]{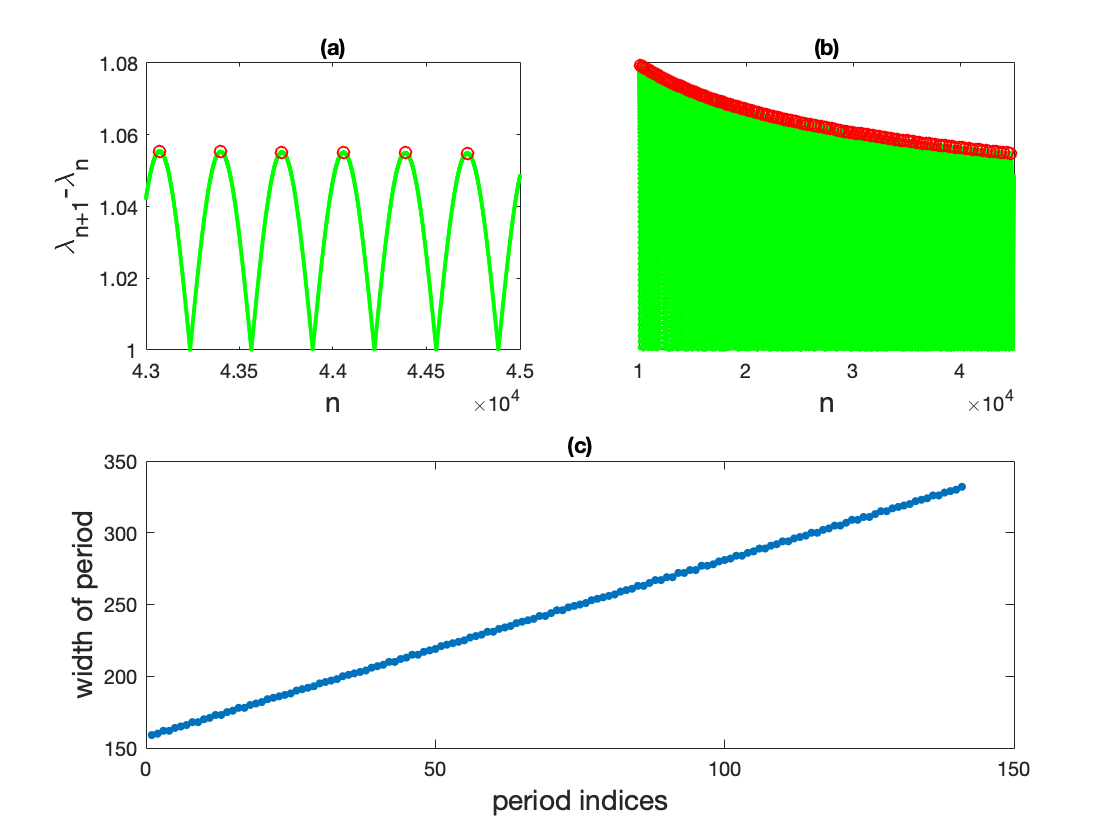}
\caption{Envelope of upper part (values greater than or equal to $1$) on energy spacing graph (a) and (b). The period increases almost linearly with the period index (c).}\label{high-energy-part}
\end{figure}

Observe the high energy parts on each parity, we see that they behave very similar to a Bessel function of the first kind $J_{\alpha}(x)$  (see e.g. \cite{AAR1999}), or the simple function
\begin{equation}
\label{eq:Harmonic_Oscillator}
x(t) = A_0 \exp^{-\epsilon \omega_0 t}\sin(\sqrt{1-\epsilon}\omega_0t + \phi).
\end{equation}

However, in the Bessel function of the first kind and \eqref{eq:Harmonic_Oscillator}, the period remains constant, while the distribution of the level spacing has increasing period length, see Figure~\ref{high-energy-part}(c). 
Because of this complex behavior of the energy spacing on the positive/negative parities, it is not possible to approximate the behavior of the energy spacing on parities using the Bessel function or \eqref{eq:Harmonic_Oscillator}. We need another approach. 

We have a sequence of peaks (the red points on Figure~\ref{high-energy-part}(a,b) that we assigned to the variable $y$. We construct a variable $x$ containing the enumerated indices of the elements in $y$ starting from 0 to the length of the sequence $y$.  After applying the translation of axes 
\[
    (x,y)\to (X=x,Y=y-1), 
\] 
utilizing the curve fitting method, we discovered that the envelope may be modelled by a function of the form 
\begin{equation*}
    p(n)=a n^b,
\end{equation*}
with coefficients estimated with 95\% confidence bounds as
\[
  a=0.798, \quad (0.7929, 0.7932),  \quad \text{ and } \quad b=-0.2495, \quad (-0.2495,-0.2495). 
\]
The goodness of fit scores are given by
\[
    \text{\rm {SSE}}=9.6798\times 10^{-10}, \text{ and } R\text{-square} = 1.0000.
\]
Clearly, since $b<0$, we have
\[
  \lim_{n\to\infty}p(n)=\lim_{n\to\infty}a n^b=0\qquad  
\]
which is consistent with \eqref{eq:spacing single parity} and, moreover, is evidence that the exponent $-\tfrac14$ in the oscillating term in Theorem \ref{thm:asymptotic} may not be further reduced and that true order of magnitude of the error term may be smaller.

Let us now give an approximation for the period of the oscillation for $g=\Delta=1$. First, according to Theorem \ref{thm:asymptotic}, the distance between two consecutive levels with positive parity reads
\[
    \lambda_{n+1}-\lambda_n=1+(-1)^n\frac{1}{\sqrt{2\pi}}\left(
    \frac{\cos(4\sqrt{n+1}-\pi/4)}{(n+1)^{\frac14}}+ \frac{\cos(4\sqrt{n}-\pi/4)}{n^{\frac14}} \right)
    +O(n^{-1/2+\delta}).
\]
For large $n$, we write $n=n_0+2\tilde{n}$ with $n_0\gg \tilde{n} > \sqrt{n_0} \gg1$.
By neglecting the contribution of the error term of order $n^{-1/2+\delta}$, we approximate the oscillatory term above 
by
\begin{equation}
    \label{deltal}
      \frac{(-1)^{n_0}(n_0+2\tilde{n})^{-1/4}}{\sqrt{2\pi}}
      \left(\cos(4\sqrt{n_0+1+2\tilde{n}}-\pi/4)+\cos(4\sqrt{n_0+2\tilde{n}}-\pi/4)\right),
\end{equation}
obtained by noting that $(n+1)^{-1/4} \approx n^{-1/4}$.
Then, the expansion of the square roots in the cosines in \eqref{deltal} yields the approximation
\begin{align}
    &\cos\left(4\sqrt{n_0}\left[1+\frac{1}{2}\frac{2\tilde{n}+1}{n_0}\right]-\pi/4\right)
    +\cos\left(4\sqrt{n_0}\left[1+\frac{1}{2}\frac{2\tilde{n}}{n_0}\right]-\pi/4\right)\nonumber\\
    &=\cos\left(C_1(n_0) + \frac{4\tilde{n}}{\sqrt{n_0}}\right) +\cos\left(C_2(n_0) + \frac{4\tilde{n}}{\sqrt{n_0}}\right).
    \label{appr}
\end{align}
In the variable $2\tilde{n}$, the period of this formula is given by
\[
  L(n_0)=\pi\sqrt{n_0},
\]
showing that period length increases proportionally to the square root of $n$, as observed
numerically in this section.
Moreover, if the period index is written as the continuous variable $\xi$, we have
\[
    n-n_{min} = \pi\int_{\xi(n_{min})}^{\xi(n)}\sqrt{n(\xi)}\textrm{d}\xi,
\]
where $n_{min}$ is the minimal value of $n_0$ where the approximation in \eqref{appr} becomes valid. Immediately, this
entails
\begin{equation}
    \frac{ {\rm d} n}{ {\rm d} \xi}=\pi\sqrt{n(\xi)}, \qquad \xi(n) = \frac{2}{\pi}\sqrt{n}+C(n_{min})=\frac{2}{\pi^2}L(n)+C(n_{min}),
    \label{rel1}
\end{equation}
that is, there can be seen an almost linear relation between period and period index, as depicted in Figure~\ref{high-energy-part}(c). 
In fact, as we did not consider the alternating factor in \eqref{deltal} for this analysis the actual period 
$\bar{L}(n)$ and index $\bar{\xi}(n)$ shown in Figure~\ref{high-energy-part}(a) are given by
\[
  \bar{L}(n) = \frac12 L(n), \qquad \bar{\xi}(n) = 2 \xi(n),
\]
and thus, from \eqref{rel1} we conclude that 
\begin{align}
  \label{eq:period_approx}
  \bar{\xi}(n)=\frac{8}{\pi^2} \bar{L}(n)+C.
\end{align}

We now compare with the numerical data where we have a total of 143 periods. Here, assuming a linear relation between the (observed) period index $\bar{\xi}(n)$ and the period $\bar{L}(n)$, we obtain using curve fitting the equation
\begin{equation*}
\bar{L}(n) = 1.234 \cdot \bar{\xi}(n) + 157,
\end{equation*}
with $\text{\rm SSE}=21.9059$ and $R\text{-square}=0.9999$. Furthermore, with $95\%$ confidence, the coefficient of period index belong to the interval $(1.231, 1.235)$ and the range of the intercept is $(156.9, 157.1)$.

Since $\pi^2/8 \approx 1.2337$ this gives almost a perfect agreement of the approximation \eqref{eq:period_approx} with the data, further supporting the notion that the order of the oscillatory term in Theorem \ref{thm:asymptotic} is the best possible and that order of the error term may be improved.


\section*{Acknowledgements}

The authors would like to thank Zeev Rudnick for useful comments on an earlier version of the manuscript.

This work was partially supported by JSPS Grant-in-Aid for Scientific Research (C) No.20K03560, JST CREST JPMJCR2113, Japan. LTHN acknowledges the support by Daicel Corp., Japan. DB is funded by the German Research Foundation (DFG) under grant No. 439943572.



\smallskip

\flushleft

Daniel Braak \par
\textsc{Department of Physics, Augusburg University,\\ Universit\"asstr. 1, 86159, Augsburg, Germany}

\texttt{daniel.braak@physik.uni-augsburg.de}

\bigskip

Linh Thi Hoai Nguyen \par
\textsc{International Institute for Carbon-neutral Energy Research, Kyushu University, \\ 744 Motooka Nishi-ku, Fukuoka 819-0395, Japan}

\texttt{linh@i2cner.kyushu-u.ac.jp}

\bigskip

Cid Reyes-Bustos \par
\textsc{NTT Institute for Fundamental Mathematics, \\NTT Communication Science Laboratories, Nippon Telegraph and Telephone Corporation, \\3-9-11 Midori-cho Musashino-shi, Tokyo, 180-8585, Japan}

\texttt{cid.reyes@ntt.com}

\bigskip

Masato Wakayama \par
\textsc{NTT Institute for Fundamental Mathematics,\\ NTT Communication Science Laboratories, Nippon Telegraph and Telephone Corporation,\\ 3-9-11 Midori-cho Musashino-shi, Tokyo, 180-8585, Japan}

\texttt{wakayama@imi.kyushu-u.ac.jp, masato.wakayama@ntt.com}


\begin{thebibliography}{99}

\bibitem{A2020}
  S.~Ashhab:
  \textit{Attempt to find the hidden symmetry in the asymmetric quantum Rabi model},
  Phys. Rev. A \textbf{101} (2020), 023808.

\bibitem{AAR1999}
  G.~E.~Andrews, R.~Askey and R.~Roy:
  Special functions, Encyclopedia of Mathematics and its Applications, Cambridge University Press, 1999.
  
\bibitem{BLZ2015}
  M.T.~Batchelor, Z.-M.~Li and H.-Q.~Zhou:
  \textit{Energy landscape and conical intersection points of the driven Rabi model},
  J. Phys. A: Math. Theor. \textbf{49} (2015), 01LT01.

\bibitem{BT1977}
M.-V.~Berry and M.~Tabor:
\textit{Level clustering in the regular spectrum},
Proc. Roy. Soc. (London) \textbf{356} (1977), 375.

\bibitem{BZ2017}
  A.~Boutet de Monvel and L.~Zielinski: \textit{Asymptotic behavior of large eigenvalues of Jaynes–Cummings type models}, J. Spectr. Theory \textbf{7} no. 2 (2017), 559–631.
    
\bibitem{BZ2021}
  A.~Boutet de Monvel and L.~Zielinski:
  \textit{Oscillatory Behavior of Large Eigenvalues in Quantum Rabi Models},
  Int. Math. Res. Notices \textbf{2021:7} (2021), 5155–5213.

\bibitem{B2011PRL}
  D.~Braak:
  \textit{Integrability of the Rabi Model},
  Phys. Rev. Lett. \textbf{107} (2011), 100401.

\bibitem{Braak2013} D.~Braak:
  \textit{Continued fractions and the Rabi model},
  J. Phys. A: Math. Theor. {\bf 46} (2013) 175301.

\bibitem{Braak2019}
  D.~Braak:
  \textit{Symmetries in the Quantum Rabi Model},
  Symmetry \textbf{11} (2019), 1259.
  
\bibitem{bcbs2016}
  D.~Braak, Q.H.~Chen, M.T.~Batchelor and E.~Solano:
  \textit{Semi-classical and quantum Rabi models: in celebration of 80 years},
  J. Phys. A: Math. Theor. \textbf{49} (2016), 300301.

\bibitem{cfz2023}
 M.~Charif, A.~Fino and L.~Zielinski:
 \textit{Behaviour of large eigenvalues for the asymmetric quantum Rabi model},
  Asymptotic Analysis, 2023 (Published Online 15 November 2023).

\bibitem{E2019}
H.-P.~Eckle:
Models of Quantum Matter: A First Course on Integrability and the Bethe Ansatz,
Oxford University Press, 2019.

\bibitem{HR2008}
  S.~Haroche and J.~M.~Raimond:
  Exploring the Quantum - Atoms, Cavities and Photons, Oxford University Press, 2008.

\bibitem{HH2012}
 M.~Hirokawa and F.~Hiroshima:
 \textit{Absence of energy level crossing for the ground state energy of the Rabi model},
 Comm. Stoch. Anal. \textbf{8} (2014), 551-560.

\bibitem{HHL2014}
  M.~Hirokawa, F.~Hiroshima and J.~L\H{o}rinczi:
  \textit{Spin-boson model through a Poisson-driven stochastic process},
  Math. Z.  \textbf{277}  (2014), 1165-1198

\bibitem{H2023}
 F.~Hiroshima:
 \textit{ Weak limit of a renormalized Rabi Hamiltonian and zeta functions},
 In preparation, 2023.


\bibitem{IW2005a}
T.~Ichinose and M.~Wakayama: \textit{Zeta functions for the spectrum of the non-commutative harmonic oscillators}, Commun. Math. Phys. \textbf{258} (2005), 697-739.

\bibitem{JC1963}
E.T.~Jaynes and F.W.~Cummings: \textit{Comparison of quantum and semiclassical radiation theories with application to the beam maser},  Proc. IEEE \textbf{51} (1963), 89–109.

\bibitem{KS1999}
  N.M.~Katz and P.~Sarnak:
  Random Matrices, Frobenius Eigenvalues, and Monodromy,
  Colloquium Publications Vol 45, American Mathematical Society, 1999.

\bibitem{KW2023}
K.~Kimoto and M.~Wakayama: \textit{Apéry-like numbers for non-commutative harmonic oscillators and automorphic integrals}, Ann.~l’Inst. Henri Poincaré D, \textbf{10} (2023), 205-275. 


\bibitem{KRW2017}
  K.~Kimoto, C.~Reyes-Bustos and M.~Wakayama:
  \textit{Determinant expressions of constraint polynomials and degeneracies of the asymmetric quantum Rabi model}.
  Int. Math. Res. Notices Vol. 2021, Issue \textbf{12} (2021), 9458-9544. 

\bibitem{KZ2001}
M.~Kontsevich and D.~Zagier: \textit{Periods}, Mathematics unlimited—2001 and beyond, 771–808, Springer, Berlin, 2001.

\bibitem{KuroW2004}
N.~Kurokawa and M.~Wakayama: \textit{Higher Selberg Zeta Function}, Commun. Math. Phys. \textbf{247}, (2004), 447–466.


\bibitem{K1985JMP}
M.~Ku\'s:
\textit{On the spectrum of a two-level system}, 
J. Math. Phys. \textbf{26} (1985), 2792-2795.

\bibitem{kus1985}
M.~Ku\'s:
\textit{Statistical Properties of the Spectrum of the Two-Level System}, 
Phys. Rev. Lett. \textbf{54} (1985), 1343.
  

\bibitem{L1985} 
S. Lang: $SL_2(\R)$,  Graduate Texts in Mathematics \textbf{105}, Springer, Berlin, 1985. 

\bibitem{LB2015JPA}
  Z.-M.~Li and M.T.~Batchelor:
  \textit{Algebraic equations for the exceptional eigenspectrum of the generalized Rabi model},
  J. Phys. A: Math. Theor. \textbf{48} (2015), 454005.


 
\bibitem{LB2021b}
  Z.-M.~Li, D.~Ferri, D.~Tilbrook and M.T.~Batchelor:
  \textit{Generalized adiabatic approximation to the asymmetric quantum Rabi model: conical intersections and geometric phases},  
  J. Phys. A: Math. Theor. \textbf{54} (2021), 405201.    
  

\bibitem{MBB2020}
  V.V.~Mangazeev, M.T.~Batchelor and V.V.~Bazhanov:
  \textit{The hidden symmetry of the asymmetric quantum Rabi model},
  J. Phys. A: Math. Theor. \textbf{54} (2021), 12LT01. 

\bibitem{M2004}
 M.L.~Mehta: 
Random Matrices, Third Edition (Pure and Applied Mathematics \textbf{142}), Academic Press, Cambridge, Massachusetts, 2004.

\bibitem{Ni2010}
T.~Niemczyk {\it et al.}:
 \textit{Beyond the Jaynes-Cummings model: circuit QED in the ultrastrong coupling regime},
  Nature Physics \textbf{6} (2010), 772-776.

\bibitem{O2001}
H.~Ochiai: 
\textit{Non-Commutative Harmonic Oscillators and Fuchsian Ordinary Differential Operators}, Commun. Math. Phys. \textbf{217}  (2001), 357-373.

\bibitem{PW2001}
A.~Parmeggiani and M.~Wakayama. \textit{Oscillator representation and systems of ordinary differential equations}, Proc. Natl. Acad. Sci. USA \textbf{98} (2001), 26–30.

\bibitem{PHP2016}
R.~Puebla, M.-J. Hwang, and M.B.~Plenio:
\textit{Excited-state quantum phase transition in the Rabi model},
Phys. Rev. A \textbf{94} (2016), 023835.

\bibitem{RS1981}
M.~Reed and B.~Simon:
Methods of Modern Mathematical Physics I: Functional Analysis,
Academic Press, 1981.

\bibitem{R2020}
C.~Reyes-Bustos:
\textit{The heat kernel of the asymmetric quantum Rabi model}, 
J. Phys. A: Math. Theor. \textbf{56} (2023), 425302.

\bibitem{RBBW2021}
C.~Reyes-Bustos, D.~Braak and M.~Wakayama: 
\textit{Remarks on the hidden symmetry of the asymmetric quantum Rabi model}, 
J. Phys. A: Math. Theor. \textbf{54} (2021), 285202. 

\bibitem{RW2021}
C.~Reyes-Bustos and M.~Wakayama:
\textit{Heat kernel for the quantum Rabi model: II. Propagators and spectral determinants},
J. Phys. A: Math. Theor. \textbf{54} (2021), 115202.

\bibitem{RW2019}
C.~Reyes-Bustos and M.~Wakayama:
\textit{The heat kernel for the quantum Rabi model}, 
Adv. Theor. Math. Phys. \textbf{26} (5), (2022), 1347-1447.

\bibitem{RW2022}
  C.~Reyes-Bustos and M.~Wakayama:
  \textit{Degeneracy and hidden symmetry for the asymmetric quantum Rabi model with integral bias},
  Commun. Numb. Theor. Phys. \textbf{16} (3),  (2022), 615-672.

\bibitem{RW2023CMP}
  C.~Reyes-Bustos and M.~Wakayama:
  \textit{Covering families  of the asymmetric quantum Rabi model: $\eta$-shifted non-commutative harmonic oscillators},
  Commun. Math. Phys. \textbf{403} (2023), 1429–1476. 

\bibitem{RW2023}
C.~Reyes-Bustos and M.~Wakayama:
\textit{Zeta limits for the spectrum of quantum Rabi models},
Preprint 2023. arXiv:2304.08943

\bibitem{R2023}
 Z.~Rudnik:
 \textit{The Quantum Rabi model: Towards Braak's conjecture},
 Preprint 2023. arXiv:2311.12622  [math.SP] 21 Nov. 2023. 





\bibitem{S1987}
P.~Sarnak: \textit{Determinants of Laplacians}, Commun. Math. Phys. \textbf{110} (1987), 113–120.

\bibitem{S2003} P.~Sarnak:  
\textit{Spectra of hyperbolic spaces},  Bull.~Amer.~Math.~Soc. \textbf{40} (2003), 441-478.

\bibitem{S1991}
  C.~Schmit:
  \textit{Quantum and classical properties of some billiards on the hyperbolic plane}, in Chaos and Quantum Physics,
  Eds. M.- J. Giannoni et al, Elsevier, (1991), 333-369.


\bibitem{S1985} M.~Schmutz:
  \textit{Two-level system coupled to a boson mode: the large $n$ limit},
  J. Phys. A: Math. Gen. {\bf 19} (1985), 3565.

\bibitem{SK2017}
 J.~Semple and M.~Kollar:
 \textit{Asymptotic behavior of observables in the asymmetric quantum Rabi model},
 J. Phys. A: Math. Theor. \textbf{51} (2017), 044002.

\bibitem{SL2000}
  S.Y.~Slavyanov and W.~Lay: 
  Special Functions: A Unified Theory Based on Singularities.
  Oxford Mathematical Monographs,
  Oxford University Press, 2000.

\bibitem{St1992}
G.~Steil: \textit{Eigenvalues of the Laplacian and of the Heck operators for $PSL_2(\Z)$}, Diploma. Phys. Univ. Hamburg (1992).

\bibitem{Sugiura1975}
M.~Sugiura: Unitary Representations and Harmonic Analysis: An Introduction, Kodansha Scientific Books, Tokyo, 1975.

\bibitem{Sugi2016}
  S.~Sugiyama:
  \textit{Spectral zeta functions for the quantum Rabi models},
  Nagoya Math.~J. \textbf{229} (2018), 52-98.

\bibitem{W2015IMRN} 
M.~Wakayama:  \textit{Equivalence between the eigenvalue problem of non-commutative harmonic oscillators and existence of holomorphic solutions of Heun  differential equations,  eigenstates degeneration and the Rabi model}, Int. Math. Res. Notices, Vol. 2016, Issue \textbf{3} (2016), 759-794.

\bibitem{WH2008}
 A.~Weiße, H.~Fehske:
 \textit{Exact Diagonalization Techniques}. 
 Computational Many-Particle Physics. Lecture Notes in Physics. Vol. 739. Springer. (2008), 529–544.

\bibitem{Le2016}
  Q.-T.~Xie, H.-H.~Zhong, M.T.~Batchelor and C.-H.~Lee:
  \textit{The quantum Rabi model: solution and dynamics},
  J. Phys. A: Math. Theor. 
  \textbf{50} (2017), 113001.

\bibitem{Y2017}
  F.~Yoshihara \textit{et al.}:
  \textit{Superconducting qubit-oscillator circuit beyond the ultrastrong-coupling regime},
  Nature Physics \textbf{13} (2017), 44.

\bibitem{YS2018}
  F.~Yoshihara, T.~Fuse, Z.~Ao, S.~Ashhab, K.~Kakuyanagi, S.~Saito, T.~Aoki, K.~Koshino, and K.~Semba:  
  \textit{Inversion of Qubit Energy Levels in Qubit-Oscillator Circuits in the Deep-Strong-Coupling Regime}, 
  Phys. Rev. Lett. \textbf{120} (2018), 183601.

  \bibitem{SCE2022}
  Y.-Q.~Shi, L.~Cong, and H.-P.~Eckle:
  \textit{Entanglement resonance in the asymmetric quantum Rabi model}
  Phys. Rev. A \textbf{105} (2022), 062450.



\end{thebibliography}
\end{document}